\newtheorem{prop}{Proposition}
\newtheorem{lemme}{Lemma}
\newtheorem{cond}{Condition}
\newtheorem{hypo}{Progression Hypothesis}
\newcommand{\carre}{\hfill$\square$}
\newcommand{\G}{\ensuremath {{\cal G}}}
\newcommand{\T}{\ensuremath {\mathbb{T}}}
\newcommand{\ST}{\ensuremath {{\cal S}}_{\T}}
\newcommand{\SG}{\ensuremath {{\cal S}}_G}
\newcommand{\EG}{\ensuremath \G=(G, \SG, \ST)}
\newcommand{\J}{\ensuremath {{\cal J}}}
\newcommand{\leadstoo}{\overset{st}{\leadsto}}
\begin{document}
\title{\large Characterizing Topological Assumptions of Distributed Algorithms\\ in Dynamic Networks\thanks{A preliminary version of this paper appeared in SIROCCO~\cite{CCF09}.}}
\author{Arnaud Casteigts$^1$ \and Serge Chaumette$^2$ \and Afonso Ferreira$^3$}

\institute{SITE, University of Ottawa, Canada.\\\email{casteig@site.uottawa.ca}\medskip \and LaBRI, Universit\'e de Bordeaux, France.\\\email{serge.chaumette@labri.fr}\medskip \and INRIA-MASCOTTE, Sophia Antipolis, France.\\\email{afonso.ferreira@cnrs-dir.fr}}
\maketitle

\begin{abstract} 
  \small
  Besides the complexity in time or in number of messages, a common approach for analyzing distributed algorithms is to look at the assumptions they make on the underlying network. We investigate this question from the perspective of network dynamics. In particular, we ask how a given property on the evolution of the network can be rigorously proven as necessary or sufficient for a given algorithm. The main contribution of this paper is to propose the combination of two existing tools in this direction: {\em local computations} by means of {\em graph relabelings}, and {\em evolving graphs}. Such a combination makes it possible to express fine-grained properties on the network dynamics, then examine what impact those properties have on the execution at a precise, intertwined, level. We illustrate the use of this framework through the analysis of three simple algorithms, then discuss general implications of this work, which include (i)~the possibility to compare distributed algorithms on the basis of their topological requirements, (ii)~a formal hierarchy of dynamic networks based on these requirements, and (iii)~the potential for mechanization induced by our framework, which we believe opens a door towards automated analysis and decision support in dynamic networks.
\end{abstract}

\section{Introduction}
\label{sec:intro}

The past decade has seen a burst of research in the field of communication networks. This is particularly true for dynamic networks due to the arrival, or impending deployment, of a multitude of applications involving new types of communicating entities such as {\em wireless sensors}, {\em smartphones}, {\em satellites}, {\em vehicles}, or swarms of {\em mobile robots}. These contexts offer both unprecedented opportunities and challenges for the research community, which is striving to design appropriate algorithms and protocols. Behind the apparent unity of these networks lies a great diversity of assumptions on their dynamics. One end of the spectrum corresponds to {\em infrastructured} networks, in which only terminal nodes are dynamic --~these include 3G/4G telecommunication networks, access-point-based {\em Wi-Fi} networks, and to some extent the Internet itself. At the other end lies {\em delay-tolerant networks} (DTNs), which are characterized by the possible absence of end-to-end communication route at any instant. The defining property of DTNs actually reflects many types of real-world contexts, from satellites or vehicular networks to pedestrian or social animal networks (e.g. birds, ants, termites). In-between lies a number of environments whose capabilities and limitations require specific attention.

A consequence of this diversity is that a given protocol for dynamic networks may prove appropriate in one context, while performing poorly (or not at all) in another. The most common approach for evaluating protocols in dynamic networks is to run simulations, and use a given {\em mobility model} (or set of traces) to generate topological changes during the execution. These parameters must faithfully reflect the target context to yield an accurate evaluation. Likewise, the comparison between two protocols is only meaningful if similar traces or mobility models are used. This state of facts makes it often ambiguous and difficult to judge of the appropriateness of solutions based on the sole experimental results reported in the literature. The problem is even more complex if we consider the possible biases induced by further parameters like the size of the network, the density of nodes, the choice of PHY or MAC layers, bandwidth limitations, latency, buffer size, {\it etc.}

The fundamental requirement of an algorithm on the network dynamics will likely be better understood from an {\em analytical} standpoint, and some recent efforts have been carried out in this direction. They include the works by O'Dell {\it et al.}~\cite{OW05} and Kuhn {\it et al.}~\cite{KLO10}, in which the impacts of given assumptions on the network dynamics are studied for some basic problems of distributed computing ({\em broadcast}, {\em counting}, and {\em election}). These works have in common an effort to make the dynamics amenable to analysis through exploiting properties of a {\em static} essence: even though the network is possibly highly-dynamic, it remains {\em connected} at every instant. The approach of population protocols~\cite{AngluinADFP2006,angluin2007} also contributed to more analytical understanding. Here, no assumptions are made on the network connectivity {\em at a given instant}, but yet, the same fundamental idea of looking at dynamic networks through the eyes of static properties is leveraged by the concept of {\em graph of interaction}, in which every entity is assumed to interact infinitely often with its neighbors (and thus, dynamics is reduced to a scheduling problem in static networks). Besides the fact that the above assumptions are strong --~we will show how strong in comparison to others in a hierarchy~--, we believe that the very attempt to {\em flatten} the time dimension does prevent from understanding the true requirements of an algorithm on the network dynamics. 

As a trivial example, consider the broadcasting of a piece of information in the network depicted in Figure~\ref{fig:abc}. The possibility to complete the broadcast in this scenario clearly depends on which node is the initial emitter: $a$ and $b$ may succeed, while $c$ cannot. Why? How can we express this intuitive property the topology evolution must have with respect to the emitter and the other nodes? Flattening the time-dimension without keeping information on the ordering of events would obviously loose some important specificities, such as the fact that nodes $a$ and $c$ are in a non-symmetrical configuration. How can we prove, more generally, that a given assumption on the dynamics is necessary or sufficient for a given problem (or algorithm)? How can we find (and define) property that relate to finer-grain aspects than recurrence or more generally {\em regularities}. Even when intuitive, {\em rigorous} characterizations of this kind might be difficult to obtain without appropriate models and formalisms --~a conceptual shift is needed.

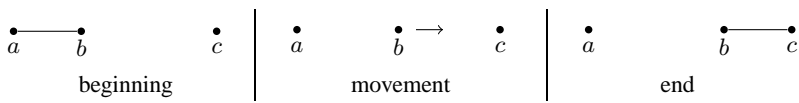
\begin{figure}[h]
  \begin{center}
    \begin{tabular}{c|c|c}
      \scriptsize
      \begin{tikzpicture}[scale=.9]
        \tikzstyle{every node}=[fill=black, circle, inner sep=1pt]
        \path (0,0) node (a) {};
        \path (1,0) node (b) {};
        \path (3,0) node (c) {};
        \path (0,.3) coordinate (tanchor) {};
        \tikzstyle{every node}=[font=\footnotesize]
        \path (a)+(0,-.25) node (la) {$a$};
        \path (b)+(0,-.25) node (lb) {$b$};
        \path (c)+(0,-.25) node (lc) {$c$};
        \draw (a)--(b);
      \end{tikzpicture}
      \quad\quad&\quad
      \begin{tikzpicture}[scale=.9]
        \tikzstyle{every node}=[fill=black, circle, inner sep=1pt]
        \path (0,0) node (a) {};
        \path (1.5,0) node (b) {};
        \path (3,0) node (c) {};
        \path (0,.3) coordinate (tanchor) {};
        \tikzstyle{every node}=[font=\footnotesize]
        \path (a)+(0,-.25) node (la) {$a$};
        \path (b)+(0,-.25) node (lb) {$b$};
        \path (c)+(0,-.25) node (lc) {$c$};
        
        \draw[->, shorten >=20pt, shorten <=5pt] (b)--(c);
      \end{tikzpicture}
      \quad\quad&\quad
      \begin{tikzpicture}[scale=.9]
        \tikzstyle{every node}=[fill=black, circle, inner sep=1pt]
        \path (0,0) node (a) {};
        \path (2,0) node (b) {};
        \path (3,0) node (c) {};
        \path (0,.3) coordinate (tanchor) {};
        \tikzstyle{every node}=[font=\footnotesize]
        \path (a)+(0,-.25) node (la) {$a$};
        \path (b)+(0,-.25) node (lb) {$b$};
        \path (c)+(0,-.25) node (lc) {$c$};
        \draw (b)--(c);
      \end{tikzpicture}
      \\
      beginning&movement&end
    \end{tabular}
  \end{center}
  \caption{\label{fig:abc}A basic scenario, where a node ($b$) moves during the execution.}
\end{figure}

We investigate these questions in the present paper. Contrary to the aforementioned approaches, in which a given context is first considered, then the feasibility of problems studied in this particular context, we suggest the somehow reverse approach of considering first a problem, then trying to characterize its {\em necessary} and/or {\em sufficient} conditions (if any) in terms of network dynamics. We introduce a general-purpose analysis framework based on the combination of 1) {\em local computations} by means of {\em graph relabelings}~\cite{ac:litovsky99}, and 2) an appropriate formalism for dynamic networks, {\em evolving graphs}~\cite{Fer04}, which formalizes the evolution of the network topology as an ordered sequence of static graphs. The strengths of this combination are several: First, the use of local computations allows to obtain general impossibility results that do not depend on a particular communication model ({\it e.g.,} {\em message passing}, {\em mailbox}, or {\em shared memory}). Second, the use of evolving graphs enables to express {\em fine-grain} network properties that remain temporal in essence. (For instance, a necessary condition for the broadcast problem above is the existence of a temporal path, or {\em journey}, from the emitter to any other node, which statement can be expressed using monadic second-order logic on evolving graphs.) 
The combination of graph relabelings and evolving graphs makes it possible to study the execution of an algorithm as an intertwined sequence of topological events and computations, leading to a precise characterization of their relation. 
The framework we propose should be considered as a {\em conceptual framework} to guide the analysis of distributed algorithms. As such, it is specified at a high-level of abstraction and does not impose the choice for, say, a particular logic ({\it e.g.} first-order {\it vs.} LMSO) or scope of computation ({\it e.g.} pairwise {\it vs.} starwise interaction), although all our examples assume LMSO and pairwise interactions. Finally, we believe this framework could pave the way to decision support systems or mechanized analysis in dynamic networks, both of which are discussed as possible applications.

Local computations and evolving graphs are first presented in Section~\ref{sec:existing}, together with central properties of dynamic networks (such as {\em connectivity over time}, whose intuitive implications on the broadcast problem were explored in various work --~see {\it e.g.} \cite{awerbuch84,BFJ03}). We describe the analysis framework based on the combination of both tools in Section~\ref{sec:framework}. This includes the reformulation of an execution in terms of {\em relabelings over a sequence of graphs}, as well as new formulations of what a necessary or sufficient condition is in terms of existence and non-existence of such a relabeling sequence. We illustrate these theoretical tools in Section~\ref{sec:analysis} through the analysis of three basic examples, {\it i.e.,} one broadcast algorithm and two counting algorithms, one of which can also be used for election. (Note that our framework was recently applied to the problem of {\em mutual exclusion} in~\cite{FGA11}.) The rest of the paper is devoted to exploring some implications of the proposed approach, articulated around the two major motifs of {\em classification} (Section~\ref{sec:classification}) and {\em mechanization} (Section~\ref{sec:mechanization}). The section on classification discusses how the conditions resulting from analysis translate into more general properties that define classes of evolving graphs. The relations of inclusion between these classes are examined, and interestingly-enough, they allow to organize the classes as a {\em connected} hierarchy. We show how this classification can reciprocally be used to evaluate and compare algorithms on the basis of their topological requirements. The section on mechanization discusses to what extent the tasks related to assessing the appropriateness of an algorithm in a given context can be automated. We provide canonical ways of checking inclusion of a given network trace in all classes resulting from the analyses in this paper (in efficient time), and mention some ongoing work around the use of the {\em coq} proof assistant in the context of local computation, which we believe could be extended to evolving graphs. Section~\ref{sec:conclusion} eventually concludes with some remarks and open problems.

\section{Related work -- the building blocks}
\label{sec:existing}
This section describes the building blocks of the proposed analysis framework, that are, \textit{Local Computations} to abstract the communication model, \textit{Graph Relabeling Systems} as a formalism to describe local computations, and \textit{Evolving Graphs} to express fine-grained properties on the network dynamics. Reading this section is required for a clear understanding of the subsequent ones.

\subsection{Abstracting communications through local computations and graph relabelings}
\label{sec:grs}

Distributed algorithms can be expressed using a variety of communication models (\textit{e.g.} message passing, mailboxes, shared memory). Although a vast majority of algorithms is designed in one of these models --~predominantly the message passing model~--, the very fact that one of them is chosen implies that the obtained results (\textit{e.g.} positive or negative characterizations and associated proofs) are limited to the scope of this model. This problem of diversity among formalisms and results, already pointed out twenty years ago in~\cite{lynch89hundred}, led researchers to consider higher abstractions when studying fundamental properties of distributed systems. 

\emph{Local computations} and \emph{Graph relabelings} were jointly proposed in this perspective in~\cite{ac:litovsky99}. These theoretical tools allow to represent a distributed algorithm as a set of local interaction rules that are independent from the effective communications. Within the formalism of graph relabelings, the network is represented by a graph whose vertices and edges are associated with labels that represent the algorithmic state of the corresponding nodes and links. An interaction rule is then defined as a transition pattern $(preconditions, actions)$, where $preconditions$ and $actions$ relate to these labels values. Since the interactions are local, each transition pattern must involve a limited and connected subset of vertices and edges. Figure~\ref{fig:local_models} shows different scopes of computation, which are not necessarily the same for $preconditions$ and $actions$. 


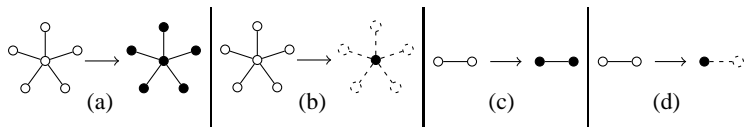
\begin{figure}[h]
  \begin{center}
    \begin{tabular}{c|c|c|c}
    \subfigure[]{\label{fig:most}
    \begin{tikzpicture}[xscale=.45,yscale=.45]
      \tikzstyle{every node}=[draw, circle, inner sep=1.2pt]
      \path (1,1) node (a) {};
      \path (a)+(18:1) node (b) {};
      \path (a)+(90:1) node (c) {};
      \path (a)+(162:1) node (d) {};
      \path (a)+(234:1) node (e) {};
      \path (a)+(306:1) node (f) {};
      \draw (a)--(b);
      \draw (a)--(c);
      \draw (a)--(d);
      \draw (a)--(e);
      \draw (a)--(f);
      \tikzstyle{every node}=[draw, fill, circle, inner sep=1.2pt]
      \path (4.5,1) node (a) {};
      \path (a)+(18:1) node (b) {};
      \path (a)+(90:1) node (c) {};
      \path (a)+(162:1) node (d) {};
      \path (a)+(234:1) node (e) {};
      \path (a)+(306:1) node (f) {};
      \draw (a)--(b);
      \draw (a)--(c);
      \draw (a)--(d);
      \draw (a)--(e);
      \draw (a)--(f);
      \draw[->, shorten >= 15pt, shorten <= 15pt] (1,1)--(a);
    \end{tikzpicture}
  }
  &
  \subfigure[]{\label{fig:most-vide}
    \begin{tikzpicture}[xscale=.45,yscale=.45]
      \tikzstyle{every node}=[draw, circle, inner sep=1.2pt]
      \path (1,1) node (a) {};
      \path (a)+(18:1) node (b) {};
      \path (a)+(90:1) node (c) {};
      \path (a)+(162:1) node (d) {};
      \path (a)+(234:1) node (e) {};
      \path (a)+(306:1) node (f) {};
      \draw (a)--(b);
      \draw (a)--(c);
      \draw (a)--(d);
      \draw (a)--(e);
      \draw (a)--(f);
      \tikzstyle{every node}=[draw, fill, circle, inner sep=1.2pt]
      \path (4.5,1) node (a) {};
      \tikzstyle{every node}=[draw, dashed, dash pattern=on 1.2pt off 1.9pt, circle, inner sep=1.4pt]
      \path (a)+(18:1) node (b) {};
      \path (a)+(90:1) node (c) {};
      \path (a)+(162:1) node (d) {};
      \path (a)+(234:1) node (e) {};
      \path (a)+(306:1) node (f) {};
      \tikzstyle{every path}=[draw, dashed, dash pattern=on 1.6pt off 2.1pt]
      \draw (a)--(b);
      \draw (a)--(c);
      \draw (a)--(d);
      \draw (a)--(e);
      \draw (a)--(f);
      \tikzstyle{every path}=[]
      \draw[->, shorten >= 15pt, shorten <= 15pt] (1,1)--(a);
    \end{tikzpicture}
  }
  &
  \subfigure[]{\label{fig:us}
    \begin{tikzpicture}[xscale=.45,yscale=.45]
      \tikzstyle{every node}=[draw, circle, inner sep=1.2pt]
      \path (1,1) node (a) {};
      \path (a)+(1,0) node (b) {};
      \path (a)+(306:1.15) coordinate (space){};
      \tikzstyle{every node}=[draw, fill, circle, inner sep=1.2pt]
      \path (a)+(3,0) node (c) {};
      \path (a)+(4,0) node (d) {};
      \draw (a)--(b);
      \draw (c)--(d);
      \draw[->, shorten >= 5pt, shorten <= 5pt] (b)--(c);
    \end{tikzpicture}
  }
  &
  \subfigure[]{\label{fig:less}
    \begin{tikzpicture}[xscale=.45,yscale=.45]
      \tikzstyle{every node}=[draw, circle, inner sep=1.2pt]
      \path (1,1) node (a) {};
      \path (a)+(1,0) node (b) {};
      \path (a)+(306:1.15) coordinate (space){};
      \tikzstyle{every node}=[draw, fill, circle, inner sep=1.2pt]
      \path (a)+(3,0) node (c) {};
      \tikzstyle{every node}=[draw, dashed, dash pattern=on 1.2pt off 1.9pt, circle, inner sep=1.4pt]
      \path (a)+(4,0) node (d) {};
      \draw (a)--(b);
      \tikzstyle{every path}=[draw, dashed, dash pattern=on 1.6pt off 2.1pt]
      \draw (c)--(d);
      \tikzstyle{every path}=[]
      \draw[->, shorten >= 5pt, shorten <= 5pt] (b)--(c);
    \end{tikzpicture}
  }
  \end{tabular}
  \caption{Different scopes of local computations; the scope of $preconditions$ is depicted in \textit{white (on left sides)}, while the scope of $actions$ is depicted in \textit{black (on right sides)}. The \textit{dashed elements} represent entities (vertices or edges) that are considered by the preconditions but remain unchanged by the actions. }
  \label{fig:local_models}
  \end{center}
\end{figure}

The approach taken by local computations shares a number of traits with that of {\em population protocols}, more recently introduced in~\cite{AngluinADFP2006,angluin2007}. Both approaches work at a similar level of abstraction and are concerned with characterizing what can or cannot be done in distributed computing. As far as the {\em scope} of computation is concerned, population protocols can be seen as a particular case of local computation focusing on pairwise interaction (see Figure~\ref{fig:us}). The main difference between these tools (if any, besides that of originating from distinct lines of research), has more to do with the role given to the underlying synchronization between nodes. While local computations typically sees this as an lower layer being itself abstracted (whenever possible), population protocols consider the execution of an algorithm given some explicit properties of an {\em interaction scheduler}. This particularity led population protocols to become an appropriate tool to study distributed computing in dynamic networks, by reducing the network dynamics into specific properties of the scheduler ({\it e.g.,} every pair of nodes interact infinitely often). Several variants of population protocols have subsequently been introduced ({\it e.g.,} assuming various types of fairness of the scheduler and graphs of interaction), however we believe the analogy between dynamics and scheduling has some limits ({\it e.g.}, in reality two nodes that interact once will not necessarily interact twice; and the precise order in which a group of nodes interacts matters all the more when interactions do not repeat infinitely often). We advocate looking at the dynamics at a finer scale, without always assuming infinite recurrence on the scheduler (such a scheduler can still be formulated as a specific class of dynamics), in the purpose of studying the precise relationship between an algorithm and the dynamics underlying its execution. To remain as general as possible, we are building on top of local computations. One may ask whether remaining as general is relevant, and whether the various models on Figure~\ref{fig:local_models} are in fact equivalent in power ({\it e.g.} could we simulate any of them by repetition of another?). The answer is negative due to different levels of atomicity ({\it e.g.} models~\ref{fig:most} {\it vs.}~\ref{fig:us}) and symmetry breaking ({\it e.g.} models~\ref{fig:us} {\it vs.}~\ref{fig:less}). The reader is referred to~\cite{CMZ06} for a detailed hierarchy of these models. Note that the equivalences between models would have to be re-considered anyway in a dynamic context, since the dynamics may prevent the possibility of applying several steps of a weaker model to simulate a stronger one. 

We now describe the graph relabeling formalism traditionally associated with local computations. Let the network topology be represented by a finite undirected loopless graph $G = (V_G,E_G)$, with $V_G$ representing the set of nodes and $E_G$ representing the set of communication links between them. Two vertices $u$ and $v$ are said \textit{neighbors} if and only if they share a common edge $(u,v)$ in $E_G$. Let $\lambda:V_G \cup E_G \rightarrow \mathcal{L}^*$ be a mapping that associates every vertex and edge from $G$ with one or several labels from an alphabet $\mathcal{L}$ (which denotes all the possible states these elements can take). The state of a given vertex $v$, \textit{resp.} edge $e$, at a given time $t$ is denoted by $\lambda_t(v)$, \textit{resp.} $\lambda_t(e)$. The whole \textit{labeled graph} is represented by the pair $(G,\lambda)$, noted $\textbf{G}$.

According to~\cite{ac:litovsky99}, a complete algorithm can be given by a triplet {$\{\mathcal{L},\mathcal{I},P\}$}, where $\mathcal{I}$ is the set of initial states, and $P$ is a set of \emph{relabeling rules} (transition patterns) representing the distributed interactions --~these rules are considered {\it uniform} ({\it i.e.,} same for all nodes). The Algorithm~1 below ($\mathcal{A}_1$ for short), gives the example of a one-rule algorithm that represents the general broadcasting scheme discussed in the introduction. We assume here that the label $I$ (\textit{resp. $N$}) stands for the state \texttt{informed} (\textit{resp.} \texttt{non-informed}). Propagating the information thus consists in repeating this single rule, starting from the emitter vertex, until all vertices are labeled $I$.\footnote{Detecting such a final state is not part of the given algorithm. The reader interested in termination detection as a distributed problem is referred to~\cite{afif-detec}.}

\begin{algorithm}[h]
  \textit{\underline{initial states:}} {$\{I,N\}$ ($I$ for the initial emitter, $N$ for all the other vertices)}\\
  \textit{\underline{alphabet:}} {$\{I,N\}$}\\  
\begin{tabular}{cc|c}
  ~~
  &
  \begin{minipage}[c]{.48\linewidth}
    \vspace{2pt}
    \textit{\underline{preconditions($r_1$):}} $\lambda(v_0)=I \wedge \lambda(v_1)=N$
    \medbreak
    \textit{\underline{actions($r_1$):}} $\lambda(v_1):=I$\vspace{4pt}
  \end{minipage}
  &
  \begin{minipage}[c]{.40\linewidth}
    \textit{~\underline{graphical notation :}}\\
    ~\begin{tikzpicture}[scale=1]
      \tikzstyle{every node}=[draw, fill, circle, inner sep=1.5pt]
      \path (1,1) node (a) {};
      \path (a)+(1,0) node (b) {};
      \path (a)+(3,0) node (c) {};
      \path (a)+(4,0) node (d) {};
      \path (a)+(-0.5,0) coordinate (lanchor);
      \tikzstyle{every node}=[font=\scriptsize]
      \path (a)+(0,.25) node (la) {$I$};
      \path (b)+(0,.25) node (lb) {$N$};
      \path (c)+(0,.25) node (lc) {$I$};
      \path (d)+(0,.25) node (ld) {$I$};
      \draw (a)--(b);
      \draw (c)--(d);
      \draw[->, shorten >= 15pt, shorten <= 15pt] (b)--(c);
    \end{tikzpicture}
  \end{minipage}
\end{tabular}
\caption{A propagation algorithm coded by a single relabeling rule ($r_1$).}
\end{algorithm}

Let us repeat that an algorithm does not specify how the nodes synchronize, \textit{i.e.,} how they select each other to perform a common computation step. From the abstraction level of local computations, this underlying synchronization is seen as an implementation choice (dedicated procedures were designed to fit the various models, {\it e.g.} local elections~\cite{local-elections} and local rendezvous~\cite{rdv} for starwise and pairwise interactions, respectively). A direct consequence is that the execution of an algorithm at this level may not be deterministic. Another consequence is that the characterization of \emph{sufficient} conditions on the dynamics will additionally require assumptions on the synchronization --~we suggest later a generic progression hypothesis that serves this purpose. Note that the three algorithms provided in this paper rely on pairwise interactions, but the concepts and methodology involved apply to local computations in general. 


\subsection{Expressing dynamic network properties using Evolving Graphs}
\label{sec:evo}
In a different context, \textit{evolving graphs}~\cite{Fer04} were proposed as a
combinatorial model for dynamic networks. The initial
purpose of this model was to provide a suitable representation of
\textit{fixed schedule dynamic networks} (FSDNs), in order to compute optimal
communication routes such as shortest, fastest and foremost
journeys~\cite{BFJ03}. In such a context, the evolution of the network was
known beforehand. In the present work, we use evolving graphs in a very different purpose, which is to express properties on the network dynamics. It is important to keep in mind that the analyzed algorithms are never supposed to know the evolution of the network beforehand. 

An evolving graph is a structure in which the evolution of the network topology is recorded as a {\em sequence} of static graphs $\SG=G_1, G_2, ...$, where every $G_i=(V_i, E_i)$ corresponds to the network topology during an interval of time $[t_i, t_{i+1})$. Several {\em models} of dynamic networks can be captured by this {\em formalism}, depending on the meaning which is given to the sequence of dates $\ST=t_1, t_2,...$. For example, these dates could correspond to every time step in a {\em discrete-time} system (and therefore be taken from a time domain $\T\subseteq {\mathbb N}$), or to variable-size time intervals in {\em continuous-time} systems ($\T\subseteq {\mathbb R}$), where each $t_{i}$ is the date when a topological event occurs in the system ({\it e.g.}, appearance or disappearance of an edge in the graph), see for example Figure~\ref{fig:4-steps}.

We consider continuous-time evolving graphs in general. (Our results actually hold for any of the above meanings.) Formally, we consider an evolving graph as the structure $\EG$, where $G$ is the union of all $G_i$ in $\SG$, called the {\em underlying graph} of $\G$.
Henceforth, we will simply use the notations $V$ and $E$ to denote $V(G)$ and $E(G)$, the sets of vertices and edges of the underlying graph $G$. Since we focus here on computation models that are \emph{undirected}, we logically consider evolving graphs as being themselves undirected. The original version of evolving graphs considered \emph{undirected} edges, as well as possible restrictions on bandwidth and latency. Finally, we will use the notation $\G_{[t_a,t_b)}$ to denote the {\em temporal subgraph} $\G'=(G',\SG', \ST')$ built from $\EG$ such that $G'=G$, $\SG'=\{G_i \in \SG: t_i \in [t_a,t_b)\}$, and $\ST'=\{t_i \in \ST \cap [t_a, t_b)\}$.

\def\wgraph (#1,#2){%
  \tikzstyle{every node}=[draw,fill,circle,inner sep=#1]
  \path (0,0) node (c){};
  \path (c)+(-1,.4) node (a){};
  \path (c)+(1,.4) node (e){};
  \path (c)+(-.6,-.6) node (b){};
  \path (c)+(.6,-.6) node (d){};
  \tikzstyle{every node}=[font=\footnotesize]
  \path (a)+(-#2,0) node (la){$a$};
  \path (b)+(-#2,0) node (lb){$b$};
  \path[above] (c) node (lc){$c$};
  \path (d)+(#2,0) node (ld){$d$};
  \path (e)+(#2,0) node (le){$e$};
}
\begin{figure}[h]
  \centering
  \subfigure[Sequence of graphs and dates]{
    \label{fig:sequence}
    \begin{minipage}[c]{\linewidth}
      \centering
      \begin{tabular}{c|c|c|c}
        \hline
        period $t_0 \rightarrow t_1$&
        period $t_1 \rightarrow t_2$&
        period $t_2 \rightarrow t_3$&
        period $t_3 \rightarrow t_4$\\
        \begin{tikzpicture}[scale=.9]
          \wgraph(.8pt,6pt)
          \draw (a)--(c);
          \draw (b)--(c);
          \draw (b)--(d);
          \draw (c)--(d);
        \end{tikzpicture}
        &
        \begin{tikzpicture}[scale=.9]
          \wgraph(.8pt,6pt)
          \draw (a)--(b);
          \draw (b)--(c);
          \draw (c)--(d);
        \end{tikzpicture}
        &
        \begin{tikzpicture}[scale=.9]
          \wgraph(.8pt,6pt)
          \draw (a)--(b);
          \draw (c)--(d);
          \draw (c)--(e);
          \draw (d)--(e);
        \end{tikzpicture}
        &
        \begin{tikzpicture}[scale=.9]
          \wgraph(.8pt,6pt)
          \draw (c)--(e);
          \draw (d)--(e);
        \end{tikzpicture}
        \\
        $G_0$ & $G_1$ & $G_2$ & $G_3$\\\hline
      \end{tabular}
      \vspace{5pt}
    \end{minipage}
  }
  \subfigure[A compact representation]{
    \label{fig:compact}
    \begin{tikzpicture}[scale=2]
      \wgraph(1pt,3pt)
      \tikzstyle{every node}=[sloped,below,font=\scriptsize,inner sep=2pt]
      \draw (a)--node{$[t_1,t_3)$}(b);
      \draw (b)--node{$[t_0,t_1)$}(d);
      \draw (d)--node{$[t_2,t_4)$}(e);
      \tikzstyle{every node}=[sloped,above,font=\scriptsize,inner sep=1pt]
      \draw (a)--node[inner sep=1.5pt]{$[t_0,t_1)$}(c);
      \draw (b)--node{$[t_0,t_2)$}(c);
      \draw (c)--node{$[t_0,t_3)$}(d);
      \draw (c)--node{$[t_2,t_4)$}(e);
      \tikzstyle{every node}=[font=\large]
      \path (-1.4,-.1) node (G){$\G=$};
      \path (0,.6) coordinate (toplimit);
      \path (1.5,0) coordinate (rightlimit);
    \end{tikzpicture}
  }
  \caption{\label{fig:4-steps}Example of evolving graph.}
\end{figure}
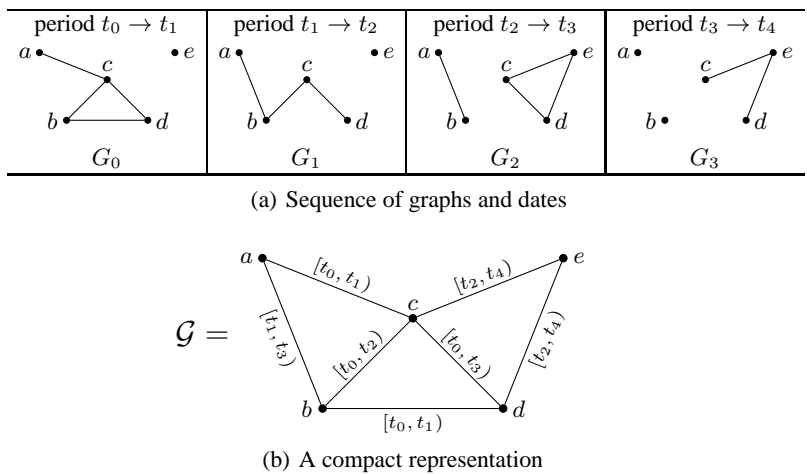

\subsection{Basic concepts and notations (given an evolving graph $\EG$).}

As a writing facility, we consider the use of a {\em presence function} $\rho: E \times \T \rightarrow \{0,1\}$ that indicates whether a given edge is present at a given date, that is, for $e \in E$ and $t\in [t_i, t_{i+1})$ (with $t_i, t_{i+1} \in \ST$), $\rho(e,t)=1 \iff e\in E_i$.

A central concept in dynamic networks is that of {\em journey}, which is the {\em temporal} extension of the concept of path. A journey can be thought of as a path {\em over time} from one vertex to another. Formally, a sequence of couples $\J=\{(e_1,\sigma_1),$ $(e_2,\sigma_2) \dots,$ $(e_k,\sigma_k)\}$ such that $\{e_1, e_2,...,e_k\}$ is a  walk in $G$ and $\{\sigma_1, \sigma_2,..., \sigma_k\}$ is a non-decreasing sequence of dates from $\T$, is a {\em journey} in $\G$ if and only if $\rho(e_i,\sigma_i)=1$ for all $i\le k$. We will say that a given journey is {\em strict} if every couple $(e_i,\sigma_i)$ is taken from a distinct graph of the sequence $\SG$.

Let us denote by $\J^*$ the set of all possible journeys in an evolving graph $\G$, and by  $\J^{*}_{(u,v)} \subseteq \J^*$ those journeys starting at node $u$ and ending at node $v$. 
If a journey exists from a node $u$ to a node $v$, that is, if $\J^{*}_{(u,v)} \ne \emptyset$, then we say that $u$ can {\em reach} $v$ in a graph $\G$, and allow the simplified notations $u\leadsto v$ (in $\G$), or $u \leadstoo v$ if this can be done through a strict journey. Clearly, the existence of journey is not symmetrical: $u \leadsto v \nLeftrightarrow v \leadsto u$; this holds regardless of whether the edges are directed or not, because the time dimension creates its own level of direction -- this point is clear by the example of Figure~\ref{fig:abc}. Given a node $u$, the set $\{v\in V : u \leadsto v\}$ is called the {\em horizon} of $u$. We assume that every node belongs to its own horizon by means of an empty journey. Here are examples of journeys in the evolving graph of Figure~\ref{fig:4-steps}:
\begin{itemize}
\item $J_{(a,e)}$=$\{(ab,\sigma_1 \in [t_1,t_2)),(bc,\sigma_2 \in [\sigma_1,t_2)),(ce,\sigma_3 \in [t_2,t_3))\}$ is a journey from $a$ to $e$\,;
\item $J_{(a,e)}$=$\{(ac,\sigma_1 \in [t_0,t_1)),(cd,\sigma_2 \in [\sigma_1,t_1),(de,\sigma_3 \in [t_3,t_4))\}$ is another journey from $a$ to $e$\,;
\item $J_{(a,e)}$=$\{(ac,\sigma_1 \in [t_0,t_1)),(cd,\sigma_2 \in [t_1,t_2),(de,\sigma_3 \in [t_3,t_4))\}$ is yet another ({\em strict}) journey from $a$ to $e$.
\end{itemize}

We will say that the network is {\em connected over time} iff $\forall u,v \in V, u \leadsto v \wedge v \leadsto u$. The concept of connectivity over time is not new and goes back at least to~\cite{awerbuch84}, in which it was called {\em eventual connectivity} (although recent literature on DTNs referred to this terms for another concept that we renamed {\em eventual instant-connectivity} to avoid confusion in Section~\ref{sec:classification}).

\section{The proposed analysis framework}
\label{sec:framework}
As a recall of the previous section, the algorithmic state of the network is given by a labeling on the corresponding graph $G$, then noted $\textbf{G}$. We denote by $G_i$ the graph covering the period $[t_i,t_{i+1})$ in the evolving graph $\G=(G,\SG,\ST)$, with $G_i \in \SG$ and $t_i,t_{i+1} \in \ST$. Notice that the symbol $G$ was used here with two different meanings: the first as the generic letter to represent the network, the second to denote the \emph{underlying graph} of $\G$. Both notations are kept as is in the following, while preventing ambiguous uses in the text.

\subsection{Putting the pieces together: relabelings over evolving graphs}

For an evolving graph $\G=(G,\SG,\ST)$ and a given date $t_i \in \ST$, we denote by \textbf{G}$_{i}$ the labeled graph $(G_i,\lambda_{t_i+\epsilon})$ representing the state of the network {\em just after} the topological event of date $t_i$, and by \textbf{G}$_{i[}$ the labeled graph $(G_{i-1},\lambda_{t_{i}-\epsilon})$ representing the network state {\em just before} that event. We note
\begin{equation*}
  Event_{t_i}(\textbf{G}_{i[}) = \textbf{G}_i~.
\end{equation*}
A number of distributed operations may occur between two consecutive events. Hence, for a given algorithm~$\mathcal{A}$ and two consecutive dates $t_i,t_{i+1} \in \ST$, we denote by $\mathcal{R}_{\mathcal{A}_{[t_i,t_{i+1})}}$ one of the possible relabeling sequence induced by $\mathcal{A}$ on the graph $G_i$ during the period $[t_i,t_{i+1})$. We note
\begin{equation*}
  \mathcal{R}_{\mathcal{A}_{[t_i,t_{i+1})}}(\textbf{G}_{i})~=~\textbf{G}_{i+1[}~.
\end{equation*}
For simplicity, we will sometimes use the notation $r_i(u,v) \in \mathcal{R}_{\mathcal{A}_{[t,t')}}$ to indicate that the rule $r_i$ is applied on the edge $(u,v)$ during $[t,t')$. A complete execution sequence from $t_0$ to $t_{k}$ is then given by means of an alternated sequence of relabeling steps and topological events, which we note
\begin{equation*}
  X\mbox{=}\mathcal{R}_{\mathcal{A}_{[t_{k-1},t_{k})}} \circ Event_{_{t_{k-1}}} \circ\mbox{..}\circ Event_{_{t_i}} \circ \mathcal{R}_{\mathcal{A}_{[t_{i-1},t_{i})}} \circ\mbox{..}\circ Event_{_{t_1}} \circ \mathcal{R}_{\mathcal{A}_{[t_0,t_1)}} (\textbf{G}_{0})
\end{equation*}
This combination is illustrated on Figure~\ref{fig:combination}.
As mentioned at the end of Section~\ref{sec:grs}, the execution of a local computation algorithm is not necessarily deterministic, and may depend on the way nodes select one another at a lower level before applying a relabeling rule. Hence, we denote by {\large $\mathcal{X}_{\mathcal{A}/\G}$} the set of all possible execution sequences of an algorithm $\mathcal{A}$ over an evolving graph $\G$. 

\begin{figure}[h]
  \begin{center}
    \def\grbar (#1,#2,#3){
      \tikzstyle{every node}=[font=\footnotesize]
      \path (#1,0) node (bas){#2};
      \path (#1,3.2) node (haut){#3};
      \draw (bas)--(haut);
    }
    \def\grinter (#1,#2,#3,#4,#5){
      \tikzstyle{every node}=[font=\scriptsize]
      \path (#1,2) coordinate (cross1);
      \path (#1,2)+(2,0) coordinate (cross2);
      \path[below right] (cross1) node (G1){#2};
      \path[below left] (cross2) node (G2){#3};
      \path (G1.south) node[inner sep=0] (G1S){};
      \path (G2.south) node[inner sep=0] (G2S){};
      \draw[->] (G1S) edge[bend right=40] node[midway,below](R1){#4} (G2S);
      \path[above] (#1,2)+(1,0) node (Br1){$\overbrace{\hspace{60pt}}$};
      \path[above] (Br1) node (GN1){#5};
    }
    
    \begin{tikzpicture}[xscale=1.17, yscale=.7]
      \path (0.3,2) node (ntime){$time$};
      \draw (ntime)--(6.2,2); \draw[->](6.3,2)--(10,2);
      \draw (6.2,1.8)--(6.2,2.2);
      \draw (6.3,1.8)--(6.3,2.2);
      \grbar (1,$start$,$t_0$)
      \grinter (1,\textbf{G}$_0$,\textbf{G}$_{1[}$,$\mathcal{R}_{[t_0,t_1)}$,$G_0$)
      \grbar (3,$Ev_{t_1}$,$t_1$)
      \grinter (3,\textbf{G}$_1$,\textbf{G}$_{2[}$,$\mathcal{R}_{[t_1,t_2)}$,$G_1$)
      \grbar (5,$Ev_{t_2}$,$t_2$)
      \grbar (7.5,$Ev_{t_{k-1}}$,$t_{k-1}$)
      \grinter (7.5,\textbf{G}$_{k-1}$,\textbf{G}$_{k[}$,$\mathcal{R}_{[t_{k-1},t_{k})}$,$G_{t_{k-1}}$)
      \grbar (9.5,$end$,$t_{k}$)
      \path (6.25,0) node (bas){$\dots$};
      \path (6.25,3.2) node (haut){$\dots$};      
    \end{tikzpicture}
  \end{center}
  \caption{\label{fig:combination}Combination of Graph Relabelings and Evolving Graphs.}
\end{figure}

\subsection{Methodology}
\label{sec:char-tools}
Below are some proposed methods and concepts to characterize the requirement of an algorithm in terms of topology dynamics. More precisely, we use the above combination to define the concept of topology-related \emph{necessary} or \emph{sufficient} conditions, and discuss how a given property can be proved to be so.

\subsubsection{Objectives of an algorithm}
Given an algorithm $\mathcal{A}$ and a labeled graph \textbf{G}, the state one wishes to reach can be given by a logic formula $\mathcal{P}$ on the labels of vertices (and edges, if appropriate). In the case of the propagation scheme (Algorithm~1 Section~\ref{sec:grs}), such a terminal state could be that all nodes are informed,
\begin{equation*}
  \mathcal{P}_1(\textbf{G})=\forall v \in V, \lambda(v)=I.
\end{equation*}

The objective ${\cal O}_{\cal A}$ is then defined as the fact of verifying the desired property by the end of the execution, that is, on the final labeled graph ${\bf G}_{k}$. In this example, we consider $\mathcal{O}_{\mathcal{A}_1}=\mathcal{P}_1(\textbf{G}_{k})$. The opportunity must be taken here to talk about two fundamentally different types of objectives in dynamic networks. In the example above, as well as in the other examples in this paper, we consider algorithms whose objective is to {\em reach} a given property by the end of the execution. Another type of objective in dynamic network is to consider the \emph{maintenance} of a desired property despite the network evolution (e.g. covering every connected component in the network by a single spanning tree). In this case, the objective must not be formulated in terms of terminal state, but rather in terms of satisfactory state, for example in-between every two consecutive topological events, {\em i.e.,} $\mathcal{O}_{\mathcal{A}}=\forall G_{i}\in S_{G}, \mathcal{P}(\textbf{G}_{i+1[})$. This actually corresponds to a {\em self-stabilization} scenario where recurrent faults are the topological events, and the network must stabilize in-between any two consecutive faults. We restrict ourselves to the first type of objective in the following.

\subsubsection{Necessary conditions}

Given an algorithm $\mathcal{A}$, its objective $\mathcal{O}_{\mathcal{A}}$ and an evolving graph property $\mathcal{C}_{\scriptsize \mathcal{N}}$, the property $\mathcal{C}_{\scriptsize \mathcal{N}}$ is a \emph{(topology-related) necessary} condition for $\mathcal{O}_{\mathcal{A}}$ if and only if
\begin{equation*}
  \forall \G, \neg \mathcal{C}_{\mathcal N}(\G) \implies \neg \mathcal{O}_{\mathcal{A}}
\end{equation*}
\noindent Proving this result comes to prove that {$\forall \G, \neg$~$ \mathcal{C}_{\mathcal N}(\G) \implies \nexists X \in \mathcal{X}_{\mathcal{A}/\G} \mid \mathcal{P}($\textbf{G}$_{k})$}. (The desired state is not reachable by the end of the execution (time $k$), unless the condition is verified.)

\subsubsection{Sufficient conditions}

Symmetrically, an evolving graph property $\mathcal{C}_{\mathcal S}$ is a \emph{(topology-related) sufficient} condition for $\mathcal{A}$ if and only if
\begin{equation*}
  \forall \G, \mathcal{C}_{\mathcal S}(\G) \implies \mathcal{O}_{\mathcal{A}}
\end{equation*}
\noindent Proving this result comes to prove that $\forall \G, \mathcal{C}_{\mathcal S}(\G) \implies \forall X \in \mathcal{X}_{\mathcal{A}/\G}, \mathcal{P}($\textbf{G}$_{k})$.\medskip

Because the abstraction level of these computations is not concerned with the underlying synchronization, no topological property can guarantee, alone, that the nodes will effectively communicate and collaborate to reach the desired objective. Therefore, the characterization of {\em sufficient} conditions requires additional assumptions on the synchronization. We propose below a generic progression hypothesis applicable to the pairwise interaction model (Figure~\ref{fig:us}). This assumption may or may not be considered realistic depending on the expected rate of topological changes.
\begin{hypo}($PH_1$).\label{hyp:prog} In every time interval $[t_i,t_{i+1})$, with $t_i$ in $\ST$, each vertex is able to apply at least one relabeling rule with each of its neighbors, provided the rule preconditions are \emph{already} satisfied at time $t_i$ (and still satisfied at the time the rule is applied).
\end{hypo}

In the case when starwise interaction (see Figure~\ref{fig:most-vide}) is considered, this hypothesis could be partially relaxed to assuming only that every node applies at least one rule in each interval.

\section{Examples of basic analyses}
\label{sec:analysis}
This section illustrates the proposed framework through the analysis of three basic algorithms, namely the propagation algorithm previously given, and two counting algorithms (one centralized, one decentralized). The results obtained here are used in the next section to highlight some implications of this work.  

\subsection{Analysis of the propagation algorithm}
We want to prove that the existence of a journey (\textit{resp.} strict journey) between the emitter and every other node is a necessary (\textit{resp.} sufficient) condition to achieve $\mathcal{O}_{\mathcal{A}_1}$. Our purpose is not as much to emphasize the results themselves --~they are rather intuitive~-- as to illustrate how the characterizations can be written in a rigorous way.
  
\begin{cond}
$\forall v \in V, emitter \leadsto v$\\(There exists a journey between the emitter and every other vertex).
\end{cond}

\addtocounter{prop}{1}
\begin{lemme} 
  \label{lem1}
  $\forall v \in V : \lambda_{t_0}(v)=N, \lambda_{\sigma>t_0}(v)=I \implies \exists u \in V, \exists \sigma' \in [t_0, \sigma) : \lambda_{\sigma'}(u)=I \wedge u \leadsto v~in~\G_{[\sigma',\sigma)}$\\
(If a non-emitter vertex has the information at some point, it implies the existence of an incoming journey from a vertex that had the information before)
\end{lemme}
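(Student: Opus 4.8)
The statement is essentially an induction on the number of relabeling steps that produce label $I$, unwound along a fixed execution sequence $X \in \mathcal{X}_{\mathcal{A}_1/\G}$. The plan is to fix such an $X$ together with a vertex $v$ that starts with $\lambda_{t_0}(v) = N$ and satisfies $\lambda_{\sigma}(v) = I$ for some $\sigma > t_0$, and then look at the \emph{first} moment at which $v$ acquires the label $I$. Since the only rule of $\mathcal{A}_1$ is $r_1$, and $r_1$ only changes the label of an $N$-vertex to $I$ when it has an $I$-labeled neighbor, this acquisition must be the application of $r_1(u,v)$ on some edge $(u,v)$ at some date $\sigma' \in [t_0,\sigma)$, with $(u,v) \in E$ present at that date (i.e. $\rho((u,v),\sigma')=1$), and with $\lambda_{\sigma'}(u) = I$ holding at the time the rule fires. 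This single-edge transition is the base mechanism; the one-couple journey $\{((u,v),\sigma')\}$ already witnesses $u \leadsto v$ in $\G_{[\sigma',\sigma)}$, and $\lambda_{\sigma'}(u) = I$ with $\sigma' \in [t_0,\sigma)$ is exactly what the conclusion asks for.

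First I would make precise what ``first moment'' means: along $X$, the labelings change only at relabeling steps (topological events $Event_{t_i}$ leave all labels untouched, by definition of $Event$ in Section~\ref{sec:framework}), so there is a well-defined first relabeling step in $X$ after which $\lambda(v) = I$. I would then argue that this step is an application of $r_1$ having $v$ as its $v_1$-role vertex: no other rule exists, and the precondition of $r_1$ forces the other endpoint $u$ (playing the $v_0$ role) to satisfy $\lambda(u) = I$ at that instant, on an edge $(u,v)$ that must be present in the current graph $G_j$ of the sequence for the rule to be applicable. Writing $\sigma'$ for the date of this step, we get $\sigma' \in [t_0,\sigma)$ (it happens strictly before $v$ is observed with label $I$ at $\sigma$, and at or after $t_0$), $\lambda_{\sigma'}(u) = I$, and $\rho((u,v),\sigma')=1$. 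Hence $\J = \{((u,v),\sigma')\}$ is a journey from $u$ to $v$, and since $\sigma' \in [\sigma',\sigma)$ the temporal subgraph $\G_{[\sigma',\sigma)}$ contains the graph active at $\sigma'$, so $u \leadsto v$ in $\G_{[\sigma',\sigma)}$.

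The main obstacle — really the only subtle point — is the bookkeeping around the instants $t_0$, $\sigma'$, $\sigma$ and the ``$+\epsilon$ / $-\epsilon$'' conventions for the pre- and post-event labelings introduced in Section~\ref{sec:framework}: I need to make sure the chosen $\sigma'$ genuinely lies in $[t_0,\sigma)$ and that ``$\lambda_{\sigma'}(u)=I$'' refers to a time at which $u$ is already informed (which is automatic, since the precondition of $r_1$ is checked at the moment of application). A secondary point is to note that it does not matter that $u$ might itself have been $N$ at $t_0$: the lemma only claims $\exists u$ informed at some $\sigma' \in [t_0,\sigma)$, not that $u=emitter$. I would close by remarking that iterating this lemma backwards along $X$ yields a journey all the way from the emitter to $v$, which is precisely how it will be used to establish Condition~1 as a necessary condition for $\mathcal{O}_{\mathcal{A}_1}$; but that iteration is the job of the subsequent proposition, not of this lemma.
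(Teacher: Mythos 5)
Your proof is correct and follows essentially the same route as the paper's: since $r_1$ is the only rule, $v$'s acquisition of label $I$ must come from an application $r_1(u,v)$ on an edge present at some $\sigma' \in [t_0,\sigma)$ with $u$ already labeled $I$, which yields the one-edge journey witnessing $u \leadsto v$ in $\G_{[\sigma',\sigma)}$. The only difference is that the paper appends a ``by transitivity'' step extending the journey further back through the chain of informants, whereas you legitimately stop at the one-hop witness (which already satisfies the literal statement) and defer that iteration to the proof of Proposition~1.
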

\addtocounter{prop}{-1}
\begin{proof}
  $\forall v \in V : \lambda_{t_0}(v) = N, (\lambda_{\sigma>t_0}(v) = I\implies\exists v' \in V : r_{1(v',v)} \in \mathcal{R}_{\mathcal{A}_1[t_0,\sigma)})$\\
  (If a non-emitter vertex has the information at some point, then it has necessarily applied rule $r_1$ with another vertex)\\
  $\implies \exists v' \in V, \sigma' \in [t_0, \sigma) : \lambda_{\sigma'}(v')=I \wedge \rho((v',v),\sigma')=1$\\
  (An edge existed at a previous date between this vertex and a vertex labeled $I$)\\
  By transitivity, $\implies \exists v'' \in V, \exists \sigma'' \in [t_0, \sigma) : \lambda_{\sigma''}(v'')=I \wedge v'' \leadsto v~in~\G_{[\sigma'',\sigma)}$\\  
  (A journey existed between a vertex labeled $I$ and this vertex)
\carre
\end{proof}

\begin{prop}
  Condition 1 ($\mathcal{C}_1$) is a necessary condition on $\G$ to allow Algorithm~1 ($\mathcal{A}_1$) to reach its objective $\mathcal{O}_{\mathcal{A}_1}$.
\end{prop}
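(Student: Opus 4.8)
The plan is to argue by contraposition: assume $\neg\mathcal{C}_1(\G)$, i.e.\ there is some vertex $w \in V$ with $emitter \not\leadsto w$, and show that in every execution $X \in \mathcal{X}_{\mathcal{A}_1/\G}$ the final labeled graph $\mathbf{G}_k$ fails $\mathcal{P}_1$, because $w$ stays labeled $N$ forever. The key tool is Lemma~\ref{lem1}: if a non-emitter vertex ever becomes $I$, then some vertex that was already $I$ reaches it by a journey confined to the intervening time window. Applied repeatedly, this lets us trace the ``history of infection'' of $w$ back, step by step, to the emitter, producing an actual journey from the emitter to $w$ — contradicting $emitter \not\leadsto w$.

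Concretely, I would first fix an arbitrary $X \in \mathcal{X}_{\mathcal{A}_1/\G}$ and suppose, for contradiction, that $\lambda_{\sigma}(w) = I$ for some $\sigma > t_0$ (note $\lambda_{t_0}(w) = N$ since $w \ne emitter$). Apply Lemma~\ref{lem1} to get $u_1 \in V$ and $\sigma_1 \in [t_0,\sigma)$ with $\lambda_{\sigma_1}(u_1) = I$ and $u_1 \leadsto w$ in $\G_{[\sigma_1,\sigma)}$. If $u_1 = emitter$ we are essentially done; otherwise $u_1$ is a non-emitter that is $I$ at time $\sigma_1 > t_0$, so Lemma~\ref{lem1} applies again to yield $u_2$ and $\sigma_2 \in [t_0,\sigma_1)$ with $\lambda_{\sigma_2}(u_2) = I$ and $u_2 \leadsto u_1$ in $\G_{[\sigma_2,\sigma_1)}$. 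The dates $\sigma > \sigma_1 > \sigma_2 > \cdots$ are strictly decreasing and, in a finite evolving graph with finitely many topological events in any bounded window, this chain must terminate; the only way it can terminate is by reaching $u_m = emitter$ (any non-emitter currently labeled $I$ forces another step). Concatenating the journeys $emitter \leadsto u_{m-1} \leadsto \cdots \leadsto u_1 \leadsto w$, and observing that they are arranged over consecutive, non-overlapping (in fact chronologically ordered) time windows $[\sigma_m,\sigma_{m-1}), \dots, [\sigma_1,\sigma)$, gives a single journey from $emitter$ to $w$ in $\G$, i.e.\ $emitter \leadsto w$ — the desired contradiction. Hence $w$ is never labeled $I$, so $\mathcal{P}_1(\mathbf{G}_k)$ is false, and since $X$ was arbitrary, $\neg\mathcal{O}_{\mathcal{A}_1}$.

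The main obstacle is the termination/finiteness argument for the backward chain and the clean concatenation of the pieces into one journey. One must be careful that each invocation of Lemma~\ref{lem1} is legitimate (the vertex produced is genuinely a non-emitter still labeled $I$ at a date strictly after $t_0$), that the strictly decreasing sequence of dates cannot be infinite — here I would invoke that $\ST$ restricted to $[t_0,\sigma]$ is finite (finitely many topological events), so only finitely many ``first times a vertex becomes informed'' can occur — and that stitching journeys over the nested windows $[\sigma_{j+1},\sigma_j)$ respects the non-decreasing-dates requirement in the definition of a journey. None of these steps is deep, but getting the bookkeeping of the time windows exactly right is where the care is needed.
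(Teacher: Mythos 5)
Your proof is correct and follows essentially the same route as the paper: contraposition via Lemma~1, tracing the chain of informed vertices backward to the emitter and using the initial states to conclude that a journey $emitter \leadsto w$ would have to exist. The paper compresses the backward-chaining, termination, and journey-concatenation into the phrase ``by transitivity'' inside the lemma's proof and a one-line appeal to the initial states in the proposition's proof, whereas you spell out the decreasing dates, the finiteness argument, and the window bookkeeping explicitly --- a more detailed rendering of the same argument, not a different approach.
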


\begin{proof}
(using Lemma~\ref{lem1}).
  Following from Lemma~\ref{lem1} and the initial states ($I$ for the emitter, $N$ for all other vertices), we have $\mathcal{O}_{\mathcal{A}_1} \implies \mathcal{C}_1$, and thus $\neg \mathcal{C}_1 \implies \neg \mathcal{O}_{\mathcal{A}_1}$
\carre
\end{proof}

\begin{cond}
$\forall v \in V, emitter \leadstoo v$
\end{cond}

\begin{prop}
    Under Progression Hypothesis~\ref{hyp:prog} ($PH_1$, defined in the previous section), Condition~2 ($\mathcal{C}_2$) is sufficient on $\G$ to guarantee that $\mathcal{A}_1$ will reach $\mathcal{O}_{\mathcal{A}_1}$.
\end{prop}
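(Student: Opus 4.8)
The plan is to prove the contrapositive-free direction directly: assume $\mathcal{C}_2(\G)$ holds, fix an arbitrary execution $X \in \mathcal{X}_{\mathcal{A}_1/\G}$ under $PH_1$, and show that every vertex is labeled $I$ on the final graph $\textbf{G}_k$. The key quantity to track is, for a vertex $v$ with $emitter \leadstoo v$, a fixed strict journey $\J = \{(e_1,\sigma_1),\dots,(e_m,\sigma_m)\}$ witnessing this, where each couple is taken from a distinct graph $G_{j_1}, G_{j_2}, \dots, G_{j_m}$ of the sequence $\SG$ with $j_1 < j_2 < \dots < j_m$ (strictness gives us the distinctness, which is exactly what makes $PH_1$ usable). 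I would then show by induction on $\ell \in \{0,1,\dots,m\}$ that the $\ell$-th vertex along $\J$ (with the $0$-th being the emitter) is labeled $I$ by time $t_{j_\ell + 1}$, i.e. no later than the end of the interval in which edge $e_\ell$ is present.

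First I would set up the base case: the emitter is labeled $I$ at $t_0$ by the initial states, and this label is never removed (rule $r_1$ only ever sets labels to $I$, never back to $N$), so it is $I$ throughout. For the inductive step, suppose the $(\ell-1)$-th vertex $u_{\ell-1}$ along $\J$ is labeled $I$ by time $t_{j_{\ell-1}+1} \le t_{j_\ell}$; since $j_{\ell-1} < j_\ell$, it is in particular labeled $I$ at time $t_{j_\ell}$, the start of the interval during which $e_\ell = (u_{\ell-1}, u_\ell)$ is present. Now consider $u_\ell$. If it is already labeled $I$ at $t_{j_\ell}$ we are done for this step. Otherwise $\lambda_{t_{j_\ell}}(u_\ell) = N$ and $\lambda_{t_{j_\ell}}(u_{\ell-1}) = I$, so the preconditions of $r_1$ on the edge $(u_{\ell-1}, u_\ell)$ are satisfied at $t_{j_\ell}$; by $PH_1$ applied to the interval $[t_{j_\ell}, t_{j_\ell+1})$, the rule $r_1$ (or some rule — here there is only one) gets applied on this edge within the interval, unless $u_\ell$ meanwhile became $I$ through another interaction, which is equally fine. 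Either way $\lambda(u_\ell) = I$ by $t_{j_\ell+1}$. The induction yields that $u_m = v$ is labeled $I$ by $t_{j_m+1} \le t_k$, and since $I$-labels persist, $\lambda_{t_k}(v) = I$. As $v$ was arbitrary, $\mathcal{P}_1(\textbf{G}_k)$ holds, hence $\mathcal{O}_{\mathcal{A}_1}$.

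The main obstacle I anticipate is handling the interplay between the precondition-timing clause in $PH_1$ ("provided the rule preconditions are \emph{already} satisfied at $t_i$ and still satisfied at the time the rule is applied") and the fact that $\lambda(u_\ell)$ could change from $N$ to $I$ independently during the interval. The clean resolution is the monotonicity observation: once satisfied, the precondition $\lambda(u_{\ell-1}) = I$ stays satisfied, and the precondition $\lambda(u_\ell) = N$ can only fail by $u_\ell$ becoming $I$ — which is the very conclusion we want. So in every case the conclusion holds, and the "still satisfied" caveat is never an impediment. I would state this persistence/monotonicity of $I$-labels as a short preliminary remark before the induction, since it is used in both the base case and the step. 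A secondary subtlety is that $\mathcal{C}_2$ must be read as requiring a \emph{strict} journey to \emph{every} vertex $v \ne emitter$; I would make explicit at the outset that the empty journey from $emitter$ to itself trivially covers that vertex, so the argument covers all of $V$.
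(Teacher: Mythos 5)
Your proof is correct and follows essentially the same route as the paper's: one application of $PH_1$ propagates the label $I$ across any edge of $G_i$ whose informed endpoint is already labeled $I$ at $t_i$, and iterating this along a strict journey (which is exactly your induction on $\ell$) carries $I$ from the emitter to every vertex by $t_k$. The paper compresses this into two lines (``By iteration on (1)''), whereas you make the induction, the persistence of $I$-labels, and the handling of $PH_1$'s timing caveat explicit --- all correctly.
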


\begin{proof}
  (1): By $PH_1$, $\forall t_i \in \mathcal{S}_{\Bbb T}{\setminus(t_{k})}, \forall (u,u') \in E_i,\lambda_{t_i}(u) = I \implies \lambda_{t_{i+1}}(u') = I$ \\
  By iteration on (1): $\forall u,v \in V, u \leadstoo v \implies (\lambda_{t_0}(u)$=$I \implies \lambda_{t_{k}}(v)$=$I)$\\
  Now, because $\lambda_{t_0}(emitter)=I$, we have $\mathcal{C}_2(\G) \implies \forall X \in \mathcal{X}_{\mathcal{A}/\G}, \mathcal{P}_1($\textbf{G}$_{k})$
\carre
\end{proof}

\subsection{Analysis of a centralized counting algorithm}

Like the propagation algorithm, the distributed algorithm presented below assumes a distinguished vertex at initial time. This vertex, called the \textit{counter}, is in charge of counting all the vertices it meets during the execution (its successive neighbors in the changing topology). Hence, the counter vertex has two labels $(C,i)$, meaning that it is the counter ($C$), and that it has already counted $i$ vertices (initially $1$, \textit{i.e.,} itself). The other vertices are labeled either $F$ or $N$, depending on whether they have already been counted or not. The counting rule is given by $r_1$ in Algorithm~2, below.

\begin{algorithm}[H]~\\
  \textit{\underline{initial states:}} {$\{(C,1),N\}$ ($(C,1)$ for the counter, $N$ for all other vertices)}\\
  \textit{\underline{alphabet:}} {$\{C,N,F,\mathbb{N}^*\}$}\\
  \textit{\underline{rule $r_1$:}}\\
    \begin{tikzpicture}[scale=1]
      \tikzstyle{every node}=[draw, fill, circle, inner sep=1.5pt]
      \path (-.5,1) coordinate (lanchor);
      \path (1,1) node (a) {};
      \path (a)+(1,0) node (b) {};
      \path (a)+(3,0) node (c) {};
      \path (a)+(4,0) node (d) {};
      \tikzstyle{every node}=[font=\scriptsize]
      \path (a)+(0,.25) node (la) {$C,i$};
      \path (b)+(0,.25) node (lb) {$N$};
      \path (c)+(0,.25) node (lc) {$C,i+1$};
      \path (d)+(0,.25) node (ld) {$F$};
      \draw (a)--(b);
      \draw (c)--(d);
      \draw[->, shorten >= 15pt, shorten <= 15pt] (b)--(c);
    \end{tikzpicture}\medbreak
\caption{Counting algorithm with a pre-selected counter.}
\end{algorithm}

\paragraph{Objective of the algorithm.}
Under the assumption of a fixed number of vertices, the algorithm reaches a terminal state when all vertices are counted, which corresponds to the fact that no more vertices are labeled $N$:
\begin{equation*}
  \mathcal{P}_2=\forall v \in V, \lambda(v)\ne N
\end{equation*}
The objective of Algorithm~2 is to satisfy this property at the end of the execution ($\mathcal{O}_{\mathcal{A}_2}=\mathcal{P}_2(\textbf{G}_{k})$). We prove here that the existence of an edge at some point of the execution between the \emph{counter} node and every other node is a necessary and sufficient condition.

\begin{cond} 
  $\forall v \in V{\setminus{\{counter\}}}, \exists t_i \in \ST : (counter,v) \in E_i$, or equivalently with the notion of underlying graph, $\forall v \in V{\setminus\{counter\}}, (counter,v) \in E$
\end{cond}

\begin{prop}
  For a given evolving graph $\G$ representing the topological evolutions that take place during the execution of $\mathcal{A}_2$, Condition~3 ($\mathcal{C}_3$) is a \textup{necessary} condition on $\G$ to allow $\mathcal{A}_2$ to reach its objective $\mathcal{O}_{\mathcal{A}_2}$.
\end{prop}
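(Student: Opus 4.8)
The plan is to mirror exactly the structure used for the propagation algorithm (Condition~1): first isolate a structural invariant of $\mathcal{A}_2$, then prove a one-step lemma analogous to Lemma~\ref{lem1}, and finally conclude by contraposition. The guiding intuition is that $r_1$ is the \emph{only} relabeling rule, and the only way a vertex can shed the label $N$ is to apply $r_1$ directly with the counter; so a vertex never adjacent to the counter stays labeled $N$ forever.

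First I would establish the invariant that, in every execution $X\in\mathcal{X}_{\mathcal{A}_2/\G}$ and at every date, the set of vertices carrying a label of the form $(C,\cdot)$ is exactly $\{counter\}$. This is immediate from the shape of $r_1$: its application leaves the label $C$ on the left endpoint (only incrementing the integer component) and turns the right endpoint from $N$ into $F$; hence no rule ever creates a $(C,\cdot)$ label on a fresh vertex, nor removes it from $counter$. Consequently, whenever $r_1$ is applied on an edge $(u,u')$ with $u'$ playing the $N$-side role, the endpoint $u$ is necessarily the distinguished $counter$ vertex.

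Next I would prove the analogue of Lemma~\ref{lem1}: for every $v\in V\setminus\{counter\}$, if $\lambda_{t_0}(v)=N$ and $\lambda_{\sigma}(v)\ne N$ for some $\sigma>t_0$, then there is a date $t_i\in\ST$ with $t_i<\sigma$ such that $(counter,v)\in E_i$. Indeed, a vertex's label can only change from $N$ by an application of $r_1$ on some edge incident to $v$ (with $v$ in the $N$-side role), say during an interval $[\sigma',\sigma'')\subseteq[t_0,\sigma)$; by the invariant the other endpoint of that edge is $counter$, and applying a rule on an edge at a given time presupposes that the edge is present then, i.e.\ $\rho((counter,v),\sigma')=1$, so $(counter,v)\in E_i$ for the graph $G_i$ covering $\sigma'$. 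This is a direct transcription of the three-line argument in the proof of Lemma~\ref{lem1}, with ``journey'' replaced by the simpler ``single present edge''.

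Finally, assume $\mathcal{O}_{\mathcal{A}_2}$ holds for some $X$, i.e.\ $\lambda_{t_k}(v)\ne N$ for all $v\in V$. For each $v\in V\setminus\{counter\}$ the initial states give $\lambda_{t_0}(v)=N$, so the lemma yields a date $t_i$ with $(counter,v)\in E_i$; since $v$ was arbitrary, this is precisely $\mathcal{C}_3(\G)$. Hence $\mathcal{O}_{\mathcal{A}_2}\implies\mathcal{C}_3(\G)$, and contrapositively $\neg\mathcal{C}_3(\G)\implies\neg\mathcal{O}_{\mathcal{A}_2}$ for every $\G$ and every $X\in\mathcal{X}_{\mathcal{A}_2/\G}$, which is the definition of a necessary condition. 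The only real subtlety — hence the step I would state most carefully — is the invariant: one must rule out, within the local-computation model, the appearance of any spurious $(C,\cdot)$ label and the loss of the counter's status, so that the ``$C$-side'' partner in any application of $r_1$ is provably the unique pre-selected $counter$ and not merely ``some $C$-labelled vertex''. Everything past that is routine.
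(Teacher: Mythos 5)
Your proof is correct and rests on exactly the same observation as the paper's: a vertex can only shed the label $N$ by an application of $r_1$ with the counter on an edge present at that time, so a vertex never sharing an edge with the counter remains labeled $N$ at $t_k$. The only differences are presentational --- the paper argues directly from $\neg\mathcal{C}_3$ whereas you go through the contrapositive, and you make explicit the invariant (only the pre-selected counter ever carries a $(C,\cdot)$ label) that the paper leaves implicit when it writes $r_1(counter,v)\notin\mathcal{R}_{\mathcal{A}_2[t_i,t_{i+1})}$.
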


\begin{proof}
  $\neg \mathcal{C}_3 (\G) \implies  \exists v \in V{\setminus \{counter\}} : (counter,v) \notin E$\\
  $\implies \exists v \in V{\setminus \{counter\}} : \forall t_i \in \mathcal{S}_\mathbb{T}{\setminus\{t_{k}\}}, r_1(counter,v) \notin \mathcal{R}_{\mathcal{A}_2{[t_i,t_{i+1})}}$\\
  $\implies \exists v \in V{\setminus \{counter\}} : \forall X \in \mathcal{X}_{\mathcal{A}_2/\G}, \lambda_{t_{k}}(v) = N$\\
  $\implies \nexists X \in \mathcal{X}_{\mathcal{A}_2/\G} : \mathcal{P}_2($\textbf{G}$_{k})\implies \neg \mathcal{O}_{\mathcal{A}_2}$
\carre
\end{proof}

\begin{prop}
  Under Progression Hypothesis~\ref{hyp:prog} (noted $PH_1$ below), $\mathcal{C}_3$ is also a sufficient condition on $\G$ to guarantee that $\mathcal{A}_2$ will reach its objective $\mathcal{O}_{\mathcal{A}_2}$.
\end{prop}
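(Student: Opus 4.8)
The plan is to prove $\mathcal{C}_3(\G) \implies \forall X \in \mathcal{X}_{\mathcal{A}_2/\G},\, \mathcal{P}_2(\mathbf{G}_k)$ by following each non-counter vertex individually through the execution. The backbone is a monotonicity observation about $\mathcal{A}_2$: its only rule $r_1$ never produces the label $N$ and never removes the label $C$ from the counter. Consequently, (i) a non-counter vertex that is labeled $F$ at some date stays $F$ forever, and (ii) the counter carries a $C$-label (with a non-decreasing count) at every date. First I would record this and draw the consequence for preconditions: for a fixed non-counter vertex $v$, the preconditions of $r_1$ on the ordered pair $(counter, v)$ --~that the counter carries a $C$-label and $v$ carries label $N$~-- once true at a date $t_i$, remain true until $r_1$ is actually applied on $(counter,v)$, since only such an application can change $\lambda(v)$, and incrementing the counter's count does not falsify the precondition (the count is a free variable of the rule, not a constant).

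Next I would use $\mathcal{C}_3$. Fix $v \in V \setminus \{counter\}$; Condition~3 provides an index $i$ with $(counter,v) \in E_i$, so this edge is present throughout $[t_i, t_{i+1})$. If $\lambda_{t_i}(v) \ne N$ then by (i) we already have $\lambda_{t_k}(v) = F \ne N$. Otherwise $\lambda_{t_i}(v) = N$; by (ii) the counter is labeled $C$ at $t_i$, so the preconditions of $r_1$ on $(counter,v)$ hold at $t_i$ and, by the paragraph above, persist until the rule fires on that pair. Progression Hypothesis~\ref{hyp:prog} then forces at least one rule application between the counter and $v$ during $[t_i, t_{i+1})$, i.e. $r_1(counter,v) \in \mathcal{R}_{\mathcal{A}_2[t_i,t_{i+1})}$, whence $\lambda_{t_{i+1}}(v) = F$ and, by (i) again, $\lambda_{t_k}(v) = F \ne N$. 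Assembling the cases: every non-counter vertex is non-$N$ at $t_k$, and the counter is labeled $C \ne N$ at $t_k$, so $\mathcal{P}_2(\mathbf{G}_k)$ holds; since $X$ was an arbitrary execution in $\mathcal{X}_{\mathcal{A}_2/\G}$, this establishes that $\mathcal{C}_3$ is sufficient.

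The step I expect to be the main obstacle is the precondition-persistence claim of the first paragraph: one must verify that concurrent applications of $r_1$ between the counter and its other uncounted neighbors during $[t_i,t_{i+1})$ cannot prevent $v$ from being counted in that interval. This is precisely where the exact wording of $PH_1$ is needed --~it guarantees one application \emph{per neighbor}, not merely one per interval~-- together with the remark that $r_1$'s precondition on $(counter,v)$ asks only for a generic $C$-label on the counter, so interleaving other countings leaves it intact. (With the relaxed, starwise variant mentioned right after $PH_1$, this argument, and in fact the proposition, would fail whenever a vertex's only contact window with the counter coincides with that of many other uncounted vertices.)
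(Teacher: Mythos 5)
Your proof is correct and follows essentially the same route as the paper's: for each non-counter vertex $v$, Condition~3 supplies an interval $[t_i,t_{i+1})$ during which the edge $(counter,v)$ is present, $PH_1$ then forces an application of $r_1$ on that pair (unless $v$ was already counted), and hence $\lambda_{t_k}(v)\ne N$ for all $v$, giving $\mathcal{P}_2(\mathbf{G}_k)$. The paper's version is a terse four-line chain of implications that silently assumes the monotonicity, the already-counted case, and the precondition-persistence facts you spell out; your elaboration fills in exactly those gaps without changing the argument.
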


\begin{proof}
    $\mathcal{C}_3(\G)\implies \forall v \in V{\setminus\{counter\}}, \exists t_i \in \mathcal{S}_{\Bbb T} : (counter,v) \in E_i$\\
    by $PH_1$, $\implies \forall v \in V{\setminus\{counter\}}, \exists t_i \in \mathcal{S}_\mathbb{T} : r_1(counter,v) \in \mathcal{R}_{\mathcal{A}_2[t_i,t_{i+1})}$\\
    $\implies \forall v \in V{\setminus\{counter\}}, \lambda_{t_{k}}(v)\ne N$\\
    $\implies \forall X \in \mathcal{X}_{\mathcal{A}_2/\G}, \mathcal{P}_2(\textbf{G}_{k})\implies \mathcal{O}_{\mathcal{A}_2}$
\carre
\end{proof}

\subsection{Analysis of a decentralized counting algorithm}

Contrary to the previous algorithm, Algorithm~3 below does not require a distinguished initial state for any vertex. Indeed, all vertices are initialized with the same labels $(C,1)$, meaning that they are all initially counters that have already included themselves into the count. Then, depending on the topological evolutions, the counters opportunistically merge by pairs (rule $r_1$) in Algorithm~$\mathcal{A}_3$. In the optimistic scenario, at the end of the execution, only one node remains labeled $C$ and its second label gives the total number of vertices in the graph. A similar counting principle was used in~\cite{AngluinADFP2006} to illustrate population protocols --~a possible application of this protocol was anecdotally mentioned, consisting in monitoring a flock of birds for fever, with the role of {\em counters} being played by sensors.

\begin{algorithm}[H]~\\
  \textit{\underline{initial states:}} {$\{(C,1)\}$ (for all vertices)} \\\textit{\underline{alphabet:}} {$\{C,F,\mathbb{N}^*\}$}\\
  \textit{\underline{rule $r_1$:}}\\  
    \begin{tikzpicture}[scale=1]
      \tikzstyle{every node}=[draw, fill, circle, inner sep=1.5pt]
      \path (1,1) node (a) {};
      \path (a)+(1,0) node (b) {};
      \path (a)+(3,0) node (c) {};
      \path (a)+(4,0) node (d) {};
      \tikzstyle{every node}=[font=\scriptsize]
      \path (a)+(0,.25) node (la) {$C,i$};
      \path (b)+(0,.25) node (lb) {$C,j$};
      \path (c)+(0,.25) node (lc) {$C,i+j$};
      \path (d)+(0,.25) node (ld) {$F$};
      \draw (a)--(b);
      \draw (c)--(d);
      \draw[->, shorten >= 15pt, shorten <= 15pt] (b)--(c);
    \end{tikzpicture}\medbreak
\caption{Decentralized counting algorithm.}
\end{algorithm}

\subsubsection{Objective of the algorithm}
Under the assumption of a fixed number of vertices, this algorithm reaches the desired state when exactly one vertex remains labeled $C$:
\begin{equation*}
  \mathcal{P}_3 = \exists u \in V : \forall v \in V{\setminus \{u\}}, \lambda(u)=C \wedge \lambda(v) \ne C.
\end{equation*}

As with the two previous algorithms, the objective here is to reach this property by the end of the execution: $\mathcal{O}_{\mathcal{A}_3}=\mathcal{P}_3(\textbf{G}_{k})$. The characterization below proves that the existence of a vertex belonging to the {\em horizon} of every other vertex is a necessary condition for this algorithm.

\begin{cond}
  $\exists v \in V : \forall u \in V, u \leadsto v$
\end{cond}

\addtocounter{prop}{1}

\begin{lemme}\label{lem2}
  $\forall u \in V, \exists u' \in V : u\leadsto u' \wedge \lambda_{t_k}(u')=C$\\
(Counters cannot disappear from their own horizon.)
\end{lemme}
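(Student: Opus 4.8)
\textbf{Proof plan for Lemma~\ref{lem2}.} The plan is to argue that the label $C$ is never simply destroyed: the only rule $r_1$ of $\mathcal{A}_3$ takes two vertices labeled $(C,i)$ and $(C,j)$ and turns one of them into $(C,i{+}j)$ while the other becomes $F$. So at every relabeling step, whenever a vertex $u$ loses its $C$ label (at some date $\sigma$), there is a neighbor $w$ at that date which keeps (or acquires) the $C$ label, and crucially the edge $(u,w)$ is present at $\sigma$, so $u \leadsto w$ via the one-hop journey $\{((u,w),\sigma)\}$, and this journey can be prepended to any later journey leaving $w$. I would formalize this as a backward induction on the sequence of relabeling steps, tracking for each $u$ a ``current $C$-representative'': a vertex $u'$ such that $u \leadsto u'$ in $\G_{[\sigma,t_k)}$ (for the date $\sigma$ under consideration) and $\lambda_\sigma(u') = C$.

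Concretely, I would proceed as follows. First, fix $u \in V$. At $t_0$ every vertex is labeled $(C,1)$, so $u$ itself is a $C$-representative of $u$ with the empty journey. Next, I would consider the finitely many relabeling applications in $\mathcal{X}_{\mathcal{A}_3/\G}$ in chronological order and maintain the invariant: \emph{after each step at date $\sigma$, there exists $u'$ with $u \leadsto u'$ in $\G_{[\sigma,t_k)}$ and $\lambda_\sigma(u') = C$}. The only way the invariant can be threatened by a step at date $\sigma$ is if the \emph{current} representative $u'$ is precisely the vertex that gets relabeled from $C$ to $F$ by an application of $r_1$ on edge $(u', w)$; but then by the rule's definition $\lambda_{\sigma^+}(w) = C$, and since $\rho((u',w),\sigma)=1$ the single-hop pair $((u',w),\sigma)$ extends the existing journey from $u$ to $u'$ into a journey from $u$ to $w$ valid in $\G_{[\cdot,t_k)}$ (here I would note that the earlier journey from $u$ to $u'$ lies in $\G_{[\cdot,\sigma]}$ and the new hop is at date $\sigma$, so concatenation respects the non-decreasing-dates requirement of a journey). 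So $w$ becomes the new representative, and the invariant is restored. Finally, applying the invariant after the last relabeling step (i.e., at a date $< t_k$, hence within $\G_{[\cdot, t_k)}$, and noting no label changes after it) yields $u' \in V$ with $u \leadsto u'$ and $\lambda_{t_k}(u') = C$, which is the claim.

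I would also handle one bookkeeping subtlety: topological events ($Event_{t_i}$) interleave with relabeling steps but do not alter labels, so they are irrelevant to the label-side of the invariant; and journeys are defined on the underlying graph with presence constraints, so a hop taken at the date it is applied is automatically legal regardless of later edge disappearances. The main obstacle, as usual with this style of argument, is purely notational rather than mathematical: making the chronological induction over $\mathcal{X}_{\mathcal{A}_3/\G}$ precise when relabeling steps and events are written as a composition of operators (as in the definition of $X$), and being careful that the concatenated journey's dates are non-decreasing across the ``handover'' from the old representative to the new one. Once the invariant is stated cleanly, the inductive step is immediate from the shape of rule $r_1$ — there is essentially only one case to check.
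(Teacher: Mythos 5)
Your proof is correct and follows essentially the same idea as the paper's: the $C$ label descending from $u$ can only move along an edge present at the moment $r_1$ is applied, so its trajectory is a journey originating at $u$ and a $C$-labeled vertex therefore remains in $u$'s horizon --- the paper states this as a one-line contradiction "in natural language" and explicitly defers the technical translation, which your invariant-based chronological induction supplies. One small slip: the invariant should locate the journey from $u$ to its representative in $\G_{[t_0,\sigma]}$ (it is built from past hops), not in $\G_{[\sigma,t_k)}$, though your later discussion of the concatenation shows you intend the former.
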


\noindent This lemma is proven in natural language because the equivalent steps would reveal substantially longer and inelegant (at least, without introducing further notations on sequences of relabelings). One should however see without effort how the proof could be technically translated.

\begin{proof}
  (by contradiction). The only operation that can suppress $C$ labels is the application of $r_1$. Since all vertices are initially labeled $C$, assuming that Lemma~\ref{lem2} is false (i.e., that there is no $C$-labeled vertex in the horizon of a vertex) comes to assume that a relabeling sequence took place transitively from vertex $u$ to a vertex $u'$ that is outside the horizon of $u$, which is by definition impossible.\carre
\end{proof}


\addtocounter{prop}{-1}
\begin{prop}
  Condition~4 ($\mathcal{C}_4$) is \textup{necessary} for $\mathcal{A}_3$ to reach its objective~$\mathcal{O}_{\mathcal{A}_3}$.
\end{prop}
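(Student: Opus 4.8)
The plan is to derive Condition~4 as a logical consequence of the objective $\mathcal{O}_{\mathcal{A}_3}$, exactly as was done for Proposition~3 with the propagation algorithm: show $\mathcal{O}_{\mathcal{A}_3} \implies \mathcal{C}_4$, which immediately gives the contrapositive $\neg \mathcal{C}_4 \implies \neg \mathcal{O}_{\mathcal{A}_3}$, i.e. that $\mathcal{C}_4$ is necessary. The main tool is Lemma~\ref{lem2}, which asserts that every vertex $u$ has in its horizon some vertex $u'$ that is still labeled $C$ at the final time $t_k$.

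First I would assume that some execution $X \in \mathcal{X}_{\mathcal{A}_3/\G}$ reaches the objective, so that $\mathcal{P}_3(\textbf{G}_k)$ holds; let $w$ be the unique vertex with $\lambda_{t_k}(w)=C$ guaranteed by $\mathcal{P}_3$. Next, fix an arbitrary vertex $u \in V$ and apply Lemma~\ref{lem2}: there exists $u' \in V$ with $u \leadsto u'$ and $\lambda_{t_k}(u')=C$. Since $w$ is the \emph{only} vertex labeled $C$ at time $t_k$, we must have $u' = w$, hence $u \leadsto w$. As $u$ was arbitrary, this yields $\forall u \in V, u \leadsto w$, and therefore $\exists v \in V : \forall u \in V, u \leadsto v$ (witnessed by $v = w$), which is exactly $\mathcal{C}_4(\G)$. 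This establishes $\mathcal{O}_{\mathcal{A}_3} \implies \mathcal{C}_4(\G)$ for every $\G$, and contraposing gives $\neg \mathcal{C}_4(\G) \implies \neg \mathcal{O}_{\mathcal{A}_3}$, which is the definition of a necessary condition.

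I do not expect a serious obstacle here, since the argument is short once Lemma~\ref{lem2} is in hand; the only subtlety worth spelling out is that the witness vertex $v$ in $\mathcal{C}_4$ need not be known in advance — it is extracted \emph{a posteriori} as the unique surviving counter, and the uniqueness clause of $\mathcal{P}_3$ is precisely what forces the per-vertex witnesses $u'$ from Lemma~\ref{lem2} to coincide into a single global witness. One should also note that the implication is stated "for every $\G$": the proof makes no assumption on $\G$ beyond what $\mathcal{O}_{\mathcal{A}_3}$ itself provides, and in particular no progression hypothesis is needed for a necessary condition (consistently with the treatment of $\mathcal{C}_1$ and $\mathcal{C}_3$).
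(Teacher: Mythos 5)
Your proof is correct and uses exactly the same key ingredient as the paper's, namely Lemma~2 combined with the uniqueness of the final counter guaranteed by $\mathcal{P}_3$; the paper merely writes the argument in contrapositive form (assuming $\neg\mathcal{C}_4$ and deducing the existence of a second final counter), whereas you prove the direct implication $\mathcal{O}_{\mathcal{A}_3}\implies\mathcal{C}_4$ and then contrapose. Up to this reformulation the two proofs are identical.
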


\begin{proof}
  $\neg \mathcal{C}_4(\G)\implies \nexists v \in V : \forall u \in V, u \leadsto v$\\
  $\implies \forall v \in V : \lambda_{t_{k}}(v)=C, \exists u \in V : u \not \leadsto v$\\
  \textit{(Given any final counter, there is a vertex that could not reach it by a journey)}.\\
  By Lemma~\ref{lem2},  $\implies \forall v \in V : \lambda_{t_{k}}(v)=C, \exists v' \in V{\setminus\{v\}} : \lambda_{t_{k}}(v')=C$\\\textit{(There are at least two final counters)}.\\
  $\implies \neg \mathcal{P}_3(\textbf{G}_{k})\implies \neg \mathcal{O}_{\mathcal{A}_3}$
  \carre
\end{proof}

\noindent The characterization of a sufficient condition for $\mathcal{A}_3$ is left open. This question is addressed from a probabilistic perspective in~\cite{AngluinADFP2006}, but we believe a deterministic condition should also exist, although very specific.

\section{Classification of dynamic networks and algorithms}
\label{sec:classification}
In this section, we show how the previously characterized conditions can be used to define evolving graph classes, some of which are included in others. The relations of inclusion lead to a \emph{de facto} classification of dynamic networks based on the properties they verify. As a result, the classification can in turn be used to compare several algorithms or problems on the basis of their topological requirements. Besides the classification based on the above conditions, we discuss a possible extension of $10$ more classes considered in various recent works. 

\subsection{From conditions to classes of evolving graphs}
\label{ssec:clas}

From $\mathcal{C}_1= \forall v \in V, emitter \leadsto v$, we derive two classes of evolving graphs. $\mathcal{F}_1$ is the class in which at least one vertex can reach all the others by a journey. If an evolving graph does not belong to this class, then there is no chance for $\mathcal{A}_1$ to succeed whatever the initial emitter. $\mathcal{F}_2$ is the class where every vertex can reach all the others by a journey. If an evolving graph does not belong to this class, then at least one vertex, if chosen as an initial emitter, will fail to inform all the others using $\mathcal{A}_1$.\medskip

From $\mathcal{C}_2= \forall v \in V, emitter \leadstoo v$, we derive two classes of evolving graphs. $\mathcal{F}_3$ is the class in which at least one vertex can reach all the others by a {\em strict} journey. If an evolving graph belongs to this class, then there is at least one vertex that could, for sure, inform all the others using $\mathcal{A}_1$ (under Progression Hypothesis~\ref{hyp:prog}). $\mathcal{F}_4$ is the class of evolving graphs in which every vertex can reach all the others by a {\em strict} journey. If an evolving graph belongs to this class, then the success of $\mathcal{A}_1$ is guaranteed for any vertex as initial emitter (again, under Progression Hypothesis~\ref{hyp:prog}). \medskip

From $\mathcal{C}_3=\forall v \in V{\setminus\{counter\}}, (counter,v) \in E$, we derive two classes of graphs. $\mathcal{F}_5$ is the class of evolving graphs in which at least one vertex shares, at some point of the execution, an edge with every other vertex. If an evolving graph does not belong to this class, then there is no chance of success for $\mathcal{A}_2$, whatever the vertex chosen for counter. Here, if we assume Progression Hypothesis~\ref{hyp:prog}, then $\mathcal{F}_5$ is also a class in which the success of the algorithm can be guaranteed for one specific vertex as counter. $\mathcal{F}_6$ is the class of evolving graphs in which every vertex shares an edge with every other vertex at some point of the execution. If an evolving graph does not belong to this class, then there exists at least one vertex that cannot count all the others using $\mathcal{A}_2$. Again, if we consider Progression Hypothesis~\ref{hyp:prog}, then $\mathcal{F}_6$ becomes a class in which the success is guaranteed whatever the counter. \medskip

Finally, from $\mathcal{C}_4=\exists v \in V : \forall u \in V, u \leadsto v$, we derive the class $\mathcal{F}_7$, which is the class of graphs such that at least one vertex can be reached from all the others by a journey (in other words, the intersection of all nodes {\em horizons} is non-empty). If a graph does not belong to this class, then there is absolutely no chance of success for $\mathcal{A}_3$.

\subsection{Relations between classes}
Since \textit{all} implies \textit{at least one}, we have:  $\mathcal{F}_2 \subseteq \mathcal{F}_1$, $\mathcal{F}_4 \subseteq \mathcal{F}_3$, and  $\mathcal{F}_6 \subseteq \mathcal{F}_5$. Since a strict journey is a journey, we have:  $\mathcal{F}_3 \subseteq \mathcal{F}_1$, and  $\mathcal{F}_4 \subseteq \mathcal{F}_2$. Since an edge is a (strict) journey, we have:  $\mathcal{F}_5 \subseteq \mathcal{F}_3$, $\mathcal{F}_6 \subseteq \mathcal{F}_4$, and  $\mathcal{F}_5 \subseteq \mathcal{F}_7$. Finally, the existence of a journey between all pairs of vertices ($\mathcal{F}_2$) implies that each vertex can be reached by all the others, which implies in turn that at least one vertex can be reach by all the others ( $\mathcal{F}_7$). We then have: $\mathcal{F}_2 \subseteq \mathcal{F}_7$. Although we have used here a non-strict inclusion ($\subseteq$), the inclusions described above are strict (one easily find for each inclusion a graph that belongs to the parent class but is outside the child class). Figure~\ref{fig:classif} summarizes all these relations.

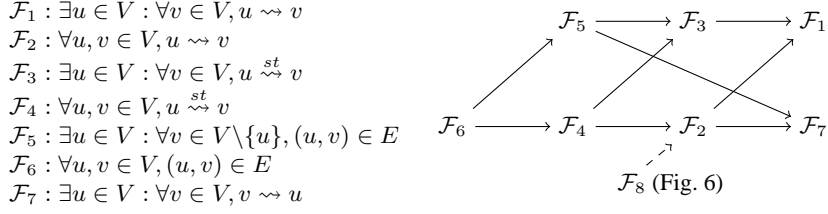
\begin{figure}[h]
  \begin{center}
    \begin{tabular}{cc}
      \begin{minipage}[c]{5.5cm}
        $\mathcal{F}_1:\exists u \in V : \forall v \in V, u \leadsto v$\\
        $\mathcal{F}_2:\forall u,v \in V, u \leadsto v$\\
        $\mathcal{F}_3:\exists u \in V : \forall v \in V, u \leadstoo v$\\
        $\mathcal{F}_4:\forall u,v \in V, u \leadstoo v$\\
        $\mathcal{F}_5:\exists u \in V : \forall v \in V{\setminus \{u\}}, (u,v) \in E$\\
        $\mathcal{F}_6:\forall u,v \in V, (u,v) \in E$\\
        $\mathcal{F}_7:\exists u \in V : \forall v \in V, v \leadsto u$
      \end{minipage}
      &
      \begin{minipage}[c]{5cm}
      \begin{tikzpicture}[xscale=1.6, yscale=1.4]
        \tikzstyle{every node}=[]
        \path (0,0) node (f6) {$\mathcal{F}_6$};
        \path (1,0) node (f4) {$\mathcal{F}_4$};
        \path (1,1) node (f5) {$\mathcal{F}_5$};
        \path (2,0) node (f2) {$\mathcal{F}_2$};
        \path (2,1) node (f3) {$\mathcal{F}_3$};
        \path (3,0) node (f7) {$\mathcal{F}_7$};
        \path (3,1) node (f1) {$\mathcal{F}_1$};
        \path (1.81,-.55) node (fig) {$\mathcal{F}_8$ \footnotesize (Fig.~\ref{fig:classif2})};
        \tikzstyle{every path}=[->]
        \draw (f6)--(f4);
        \draw (f6)--(f5);
        \draw (f4)--(f2);
        \draw (f4)--(f3);
        \draw (f5)--(f3);
        \draw (f5)--(f7);
        \draw (f2)--(f7);
        \draw (f2)--(f1);
        \draw (f3)--(f1);
        \draw[dashed] (fig.north)+(-.2,0)--(f2);
      \end{tikzpicture}
      \end{minipage}
    \end{tabular}
  \end{center}
  \caption{\label{fig:classif} A first classification of dynamic networks, based on evolving graph properties that result from the analysis of Section~\ref{sec:analysis}.}
\end{figure}

Further classes were introduced in the recent literature, and organized into a classification in \cite{CFQS10}. They include ${\cal F}_8$ ({\em round connectivity}): every node can reach every other node, and be reached back afterwards; ${\cal F}_9$: ({\em recurrent connectivity}): every node can reach all the others infinitely often; ${\cal F}_{10}$ ({\em recurrence of edges}): the underlying graph $G=(V,E)$ is connected, and every edge in $E$ re-appears infinitely often; ${\cal F}_{11}$ ({\em time-bounded recurrence of edges}): same as ${\cal F}_{10}$, but the re-appearance is bounded by a given time duration; ${\cal F}_{12}$ ({\em periodicity}): the underlying graph $G$ is connected and every edge in $E$ re-appears at regular intervals; ${\cal F}_{13}$ ({\em eventual instant-routability}): given any pair of nodes and at any time, there always exists a future $G_i$ in which a (static) path exists between them; ${\cal F}_{14}$ ({\em eventual instant-connectivity}): at any time, there always exists a future $G_i$ that is connected in a classic sense ({\it i.e.,} a static path exists in $G_i$ between any pair of nodes); ${\cal F}_{15}$ ({\em perpetual instant-connectivity}): every $G_i$ is connected in a static sense; ${\cal F}_{16}$ ({\em T-interval-connectivity}): all the graphs in any sub-sequence $G_i, G_{i+1}, ... G_{i+T}$ have at least one connected spanning subgraph in common. Finally, ${\cal F}_{17}$ is the reference class for population protocols, it corresponds to the subclass of ${\cal F}_{10}$ in which the underlying graph $G$ ({\em graph of interaction}) is a complete graph.

All these classes were shown to have particular algorithmic significance. For example, ${\cal F}_{16}$ allows to speed up the execution of some algorithms by a factor $T$~\cite{KLO10}. In a context of broadcast, ${\cal F}_{15}$ allows to have at least one new node informed in every $G_i$, and consequently to bound the broadcast time by (a constant factor of) the network size~\cite{OW05}. ${\cal F}_{13}$ and ${\cal F}_{14}$ were used in~\cite{RBK07} to characterize the contexts in which {\em non-delay-tolerant} routing protocols can eventually work if they retry upon failure. Classes ${\cal F}_{10}$, ${\cal F}_{11}$, and ${\cal F}_{12}$ were shown to have an impact on the distributed versions of {\em foremost}, {\em shortest}, and {\em fastest} broadcasts with termination detection. Precisely, foremost broadcast is feasible in ${\cal F}_{10}$, whereas shortest and fastest broadcasts are not; shortest broadcast becomes feasible in ${\cal F}_{11}$~\cite{CFMS10}, whereas fastest broadcast is not and becomes feasible in ${\cal F}_{12}$. Also, even though foremost broadcast is possible in ${\cal F}_{10}$, the memorization of the journeys for subsequent use is not possible in ${\cal F}_{10}$ nor ${\cal F}_{11}$; it is however possible in ${\cal F}_{12}$~\cite{CFMS11}. Finally, ${\cal F}_8$ could be regarded as a {\em sine qua non} for termination detection in many contexts.

Interestingly, this new range of classes --~from ${\cal F}_{8}$ to ${\cal F}_{17}$~-- can also be integrally connected by means of a set of inclusion relations, as illustrated on Figure~\ref{fig:classif2}. Both classifications can also be inter-connected through ${\cal F}_{8}$, a subclass of ${\cal F}_2$, which brings us to $17$ connected classes. A classification of this type can be useful in several respects, including the possibility to transpose results or to compare solutions or problems on a formal basis, which we discuss now.

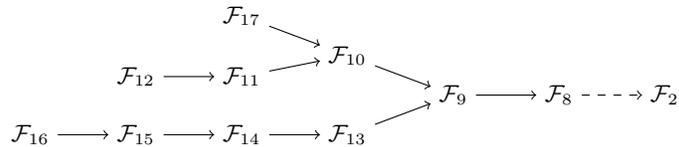
\begin{figure}[h]
  \centering
\begin{tikzpicture}[sloped, level distance=40pt, font=\small, sibling distance=30pt]
  \tikzstyle{every path}=[<-]
  \node {${\cal F}_8$}
  child [grow=right]{
    node {${\cal F}_2$}
    edge from parent [->, dashed]
  }
  child [grow=left]{
    node {${\cal F}_9$}
    child {
      node {${\cal F}_{10}$}
      child {
        node {${\cal F}_{17}$}
      }
      child [sibling distance=16pt] {
        node {${\cal F}_{11}$}
        child {
          node {${\cal F}_{12}$}
        }
      }
    }
    child {
      node {${\cal F}_{13}$}
      child {
        node {${\cal F}_{14}$}
        child {
          node {${\cal F}_{15}$}
          child {
            node {${\cal F}_{16}$}
          }
        }
      }
    }
  };
\end{tikzpicture}
\caption{\label{fig:classif2} Complementary classification, based on further classes found in the recent literature (figure from~\cite{CFQS11}).}
\end{figure}

\subsection{Comparison of algorithms based on their topological requirements} 

Let us consider the two counting algorithms given in Section~4. To have any chance of success, $\mathcal{A}_2$ requires the evolving graph to be in $\mathcal{F}_5$ (with a fortunate choice of counter) or in $\mathcal{F}_6$ (with any vertex as counter). On the other hand, $\mathcal{A}_3$ requires the evolving graph to be in $\mathcal{F}_7$. Since both $\mathcal{F}_5$ (directly) and $\mathcal{F}_6$ (transitively) are included in $\mathcal{F}_7$, there are some topological scenarios (\textit{i.e.,} $\mathcal{G} \in {\mathcal{F}_7}{\setminus \mathcal{F}_5}$) in which $\mathcal{A}_2$ has no chance of success, while $\mathcal{A}_3$ has some. Such observation allows to claim that $\mathcal{A}_3$ is more general than $\mathcal{A}_2$ with respect to its topological requirements. This illustrates how a classification can help compare two solutions on a fair and formal basis. In the particular case of these two counting algorithms, however, the claim could be balanced by the fact that a sufficient condition is known for $\mathcal{A}_2$, whereas none is known for $\mathcal{A}_3$. The choice for the right algorithm may thus depend on the target mobility context: if this context is thought to produce topological scenarios in $\mathcal{F}_5$ or $\mathcal{F}_6$, then $\mathcal{A}_2$ could be preferred, otherwise $\mathcal{A}_3$ should be considered. 

A similar type of reasoning could also teach us something about the problems themselves. Consider the above-mentioned results about {\em shortest}, {\em fastest}, and {\em foremost} broadcast with termination detection, the fact that ${\cal F}_{12}$ is included in ${\cal F}_{11}$, which is itself included in ${\cal F}_{10}$, tells us that there is a (at least partial) order between these problems topological requirements: $foremost \preceq shortest \preceq fastest$. 

We believe that classifications of this type have the potential to lead more equivalence results and formal comparison between problems and algorithms. Now, one must also keep in mind that these are only \emph{topology-related} conditions, and that other dimensions of properties --~{\it e.g.,} what knowledge is available to the nodes, or whether they have unique identifiers~-- keep playing the same important role as they do in a static context. Considering again the same example, the above classification hides that detecting termination in the {\em foremost} case in ${\cal F}_{10}$ requires the emitter to know the number of nodes $n$ in the network, whereas this knowledge is not necessary for shortest broadcast in ${\cal F}_{11}$ (the alternative knowledge of knowing a bound on the recurrence time is sufficient). In other words, lower topology-related requirements do not necessarily imply lower requirements in general.

\section{Mechanization potential}
\label{sec:mechanization}

One of the motivations of this work is to contribute to the development of assistance tools for algorithmic design and decision support in mobile ad hoc networks. The usual approach to assess the correct behavior of an algorithm or its appropriateness to a particular mobility context is to perform simulations. A typical simulation scenario consists in executing the algorithm concurrently with topological changes that are generated using a {\em mobility model} ({\it e.g.,} the {\em random way point} model, in which every node repeatedly selects a new destination at random and moves towards it), or on top of real network traces that are first collected from the real world, then replayed at simulation time. As discussed in the introduction, the simulation approach has some limitations, among which generating results that are difficult to generalize, reproduce, or compare with one another on a non-subjective basis.

The framework presented in this paper allows for an analytical alternative to simulations. The previous section already discussed how two algorithms could be compared on the basis of their topological requirements. We could actually envision a larger-purpose chain of operations, aiming to characterize how appropriate a given algorithm is to a given mobility context. The complete workflow is depicted on Figure~\ref{fig:checking}. 

\begin{figure}[h]
  \centering
  \begin{tikzpicture}[xscale=.81, yscale=1.2]
    \tikzstyle{every node}=[font=\scriptsize];
    \path[right] (0,2) node (algo) {Algorithm};
    \path[right] (0,1.3) node (mobi) {Mobility Model};
    \path[right] (0,0.7) node (real) {Real Network};
    \path[right] (7.7,2) node (clas) {Dynamic Graph Classes};
    \path[right] (7.7,1) node (inst) {Dynamic Graph Instances};
    \path[right] (4.8,2) node (cond) {Conditions};
    \path[right] (4.8,1.3) node (tra1) {Network Traces};
    \path[right] (4.8,0.7) node (tra2) {Network Traces};
    \tikzstyle{every node}=[draw, rectangle, rounded corners=6pt, font=\footnotesize];
    \path (3.65,2) node (anal) {Analysis};
    \path (3.65,1.3) node (gene) {Generation};
    \path (3.65,0.7) node (sens) {Collection};
    \path[right] (11.8,1.5) node (chec) {Inclusion Checking};
    \tikzstyle{every node}=[];
    \path[left] (chec)+(0,-.9) node (yes) {Yes};
    \path[right] (chec)+(0,-.9) node (no) {No};
    \path (chec.west)+(0,.1) coordinate (c1);
    \path (chec.west)+(0,-.1) coordinate (c2);
    \path (inst.west)+(0,.08) coordinate (i1);
    \path (inst.west)+(0,-.08) coordinate (i2);
    \tikzstyle{every path}=[->, draw, thin];
    \draw (algo)--(anal);
    \draw (anal)--(cond);
    \draw (cond)--(clas);
    \draw (mobi)--(gene);
    \draw (gene)--(tra1.west);
    \draw (real)--(sens);
    \draw (sens)--(tra2.west);
    \draw (chec)--(yes);
    \draw (chec)--(no);
    \tikzstyle{every path}=[->, draw, thin, shorten >=1.5pt];
    \draw (clas.east)--(c1);
    \draw (inst.east)--(c2);
    \draw (tra1.east)--(i1);
    \draw (tra2.east)--(i2);
  \end{tikzpicture}
  \caption{\label{fig:checking} Automated checking of the suitability of an algorithm in various mobility contexts.}
\end{figure}
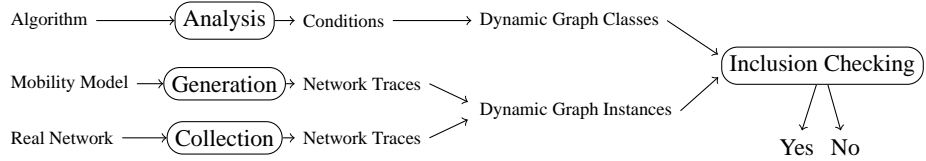

On the one hand, algorithms are analyzed, and necessary/sufficient conditions determined. This step produces {\em classes} of evolving graphs. On the other hand, mobility models and real-world networks can be used to generate a collection of network traces, each of which corresponds to an {\em instance} of evolving graphs. Checking how given instances distribute within given classes --~{\it i.e.,} are they included or not, in what proportion?~-- may give a clue about the appropriateness of an algorithm in a given mobility context. This section starts discussing the question of understanding to what extent such a workflow could be automated ({\em mechanized}), in particular through the two core operations of {\em Inclusion checking} and {\em Analysis}, both capable of raising problems of a theoretical nature.

\subsection{Checking network traces for inclusion in the classes}
We provide below an efficient solution to check the inclusion of an evolving graph in any of the seven classes of Figure~\ref{fig:classif} --~that are, all classes derived from the analysis carried out in Section~\ref{sec:analysis}. Interestingly, each of these classes allows for efficient checking strategies, provided a few transformations are done. The \emph{transitive closure} of the journeys of an evolving graph $\mathcal{G}$ is the graph $H = (V,A_{H})$, where $A_{H} = \{(v_i,v_j) : v_i \leadsto v_j)\}$. Because journeys are oriented entities, their transitive closure is by nature a \emph{directed} graph (see Figure~\ref{fig:closure}). As explained in~\cite{BF03}, the computation of transitive closures can be done efficiently, in $O(|V|.|E|.(log|\ST|.log|V|)$ time, by building the tree of \emph{shortest} journeys from each node in the network. 
We extend this notion to the case of strict journeys, with $H_{strict} = (V,A_{H_{strict}})$, where $A_{H_{strict}} = \{(v_i,v_j) : v_i \leadstoo v_j)\}$. 

\begin{figure}[h]
\def\wgraph (#1,#2){%
  \tikzstyle{every node}=[draw,fill,circle,inner sep=#1]
  \path (0,0) node (c){};
  \path (c)+(-1,.4) node (a){};
  \path (c)+(1,.4) node (e){};
  \path (c)+(-.6,-.6) node (b){};
  \path (c)+(.6,-.6) node (d){};
  \tikzstyle{every node}=[font=\footnotesize]
  \path (a)+(-#2,0) node (la){$a$};
  \path (b)+(-#2,0) node (lb){$b$};
  \path[above] (c) node (lc){$c$};
  \path (d)+(#2,0) node (ld){$d$};
  \path (e)+(#2,0) node (le){$e$};
}
\centering
  \begin{tikzpicture}[scale=2]
    \wgraph(.8pt,4pt)
    \tikzstyle{every node}=[sloped,below,font=\scriptsize,inner sep=2pt]
    \draw (a)--node{$[1,2)$}(b);
    \draw (b)--node{$[2,3)$}(d);
    \draw (d)--node{$[3,5)$}(e);
    \tikzstyle{every node}=[sloped,above,font=\scriptsize,inner sep=1pt]
    \draw (a)--node[inner sep=1.5pt]{$[1,2)$}(c);
    \draw (b)--node[inner sep=1.5pt]{$[2,3)$}(c);
    \draw (c)--node{$[2,4)$}(d);
    \draw (c)--node{$[3,5)$}(e);
    
    \tikzstyle{every node}=[draw, fill, inner sep=.1pt]
    \path (2.6,-.1) coordinate (zz);
    \path (zz)+(18:.6) node (ee) {};
    \path (zz)+(90:.6) node (cc) {};
    \path (zz)+(162:.6) node (aa) {};
    \path (zz)+(234:.6) node (bb) {};
    \path (zz)+(306:.6) node (dd) {};
    \tikzstyle{every node}=[]
    \path (aa)+(-.1,.02) node (laa){$a$};
    \path (bb)+(-.08,-.08) node (laa){$b$};
    \path (cc)+(0,.09) node (laa){$c$};
    \path (dd)+(.08,-.08) node (laa){$d$};
    \path (ee)+(.1,.02) node (laa){$e$};
    
    \tikzstyle{every path}=[>=stealth, shorten >=1.5pt, shorten <=1.5pt]
    \path[<->] (aa) edge [bend right=30] (bb);
    \path[<->] (aa) edge [bend left=30] (cc);
    \path[->] (aa) edge [bend right=10] (dd);
    \path[->] (aa) edge [bend left=10] (ee);
    \path[<->] (bb) edge [bend left=10] (cc);
    \path[<->] (bb) edge [bend right=30] (dd);
    \path[->] (bb) edge [bend right=10] (ee);
    \path[<->] (cc) edge [bend left=10] (dd);
    \path[<->] (cc) edge [bend left=30] (ee);
    \path[<->] (dd) edge [bend right=30] (ee);
    
    \path (zz)+(-1.4,0) coordinate (zzleft);
    \draw[->, semithick, shorten >=50pt] (zzleft)--(zz);
  \end{tikzpicture}
\caption{\label{fig:closure}Example of transitive closure of the journeys of an evolving graph.}
\end{figure}

Given an evolving graph $\mathcal{G}$, its underlying graph $G$, its transitive closure $H$, and the transitive closure of its \emph{strict} journeys $H_{strict}$, the inclusion in each of the seven classes can be tested as follows:
\begin{itemize}
\item $\mathcal{G} \in \mathcal{F}_1 \Longleftrightarrow$ $H$ contains an out-dominating set of size 1.
\item $\mathcal{G} \in \mathcal{F}_2 \Longleftrightarrow$ $H$ is a complete graph.
\item $\mathcal{G} \in \mathcal{F}_3 \Longleftrightarrow$ $H_{strict}$ contains an out-dominating set of size 1.
\item $\mathcal{G} \in \mathcal{F}_4 \Longleftrightarrow$ $H_{strict}$ is a complete graph.
\item $\mathcal{G} \in \mathcal{F}_5 \Longleftrightarrow$ $G$ contains a dominating set of size 1.
\item $\mathcal{G} \in \mathcal{F}_6 \Longleftrightarrow$ $G$ is a complete graph.
\item $\mathcal{G} \in \mathcal{F}_7 \Longleftrightarrow$ $H$ contains an in-dominating set of size 1.
\end{itemize}

How the classes of Figure~\ref{fig:classif2} could be checked is left open. Their case is more complex, or at least substantially different, because the corresponding definitions rely on the notion of {\em infinite}, which a network trace is necessarily not. For example, whether a given edge is eventually going to reappear (e.g. in the context of checking inclusion to class $\mathcal{F}_8$ or $\mathcal{F}_9$) cannot be inferred from a finite sequence of events. However, it is certainly feasible to check whether a {\em given} recurrence bound applies within the time-span of a {\em given} network trace (bounded recurrence $\mathcal{F}_{10}$), or similarly, whether the sequence of events repeats modulo $p$ (for a given $p$) within the given trace (periodic networks $\mathcal{F}_{11}$). 

\subsection{Towards a mechanized analysis}

The most challenging component of the workflow on Figure~\ref{fig:checking} is certainly that of {\em Analysis}. Ultimately, one may hope to build a component like that of Figure~\ref{fig:analysis}, which is capable of answering whether a given property is necessary (no possible success without), sufficient (no possible failure with), or orthogonal (both success and failure possible) to a given algorithm with given computation assumptions ({\it e.g.,} a particular type of synchronization or progression hypothesis). Such a workflow could ultimately be used to confirm an intuition of the analyst, as well as to discover new conditions automatically, based on a collection of properties.

\begin{figure}[h]
  \centering
  \begin{tikzpicture}[scale=.8]
    \tikzstyle{every node}=[font=\footnotesize];
    \path (0,.4) node[left] (algo) {Algorithm};
    \path (3,1) node (prop) {Evolving graph property};
    \path (0,-.4) node[left] (assu) {Computational assumptions};
    \path (5,0) node[right] (resu) {$\{Necessary, Sufficient, None\}$};
    \tikzstyle{every node}=[draw, rectangle, rounded corners=6pt, font=\footnotesize];
    \path (3,0) node (anal) {Mechanized analysis};
    \tikzstyle{every path}=[->];
    \draw (algo.east)--(anal);
    \draw (prop)--(anal);
    \draw (assu.east)--(anal);
    \draw (anal)--(resu);
  \end{tikzpicture}
  \caption{\label{fig:analysis}Possible interface for a mechanized analysis.}
\end{figure}
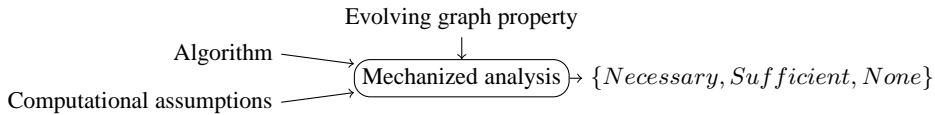

As of today, such an objective is still far from reach, and a number of intermediate steps should be taken. For example, one may consider specific {\em instances} of evolving graphs rather than general properties. We develop below a prospective idea inspired by the work of Cast{\'e}ran {\it et al.} in static networks~\cite{coq09,coq11}. Their work focus on bridging the gap between local computations and the formal proof management system {\em Coq}~\cite{coq-manual}, and materializes, among others, as the development of a {\em Coq} library: {\em Loco}. This library contains appropriate representations for graphs and labelings in {\em Coq} (by means of {\em sets} and {\em maps}), as well as an operational description of relabeling rule execution (see Section~6 of~\cite{coq11} for details). The fact that such a machinery is already developed is worthwhile noting, because we believe {\em evolving graphs} could be seen themselves as relabelings acting on a 'presence' label on vertices and edges. The idea in this case would be to re-define topological events as being themselves graph relabeling rules whose {\em preconditions} correspond to a $G_i$ and {\em actions} lead to the next $G_{i+1}$. Considering the execution of these rules concurrently with those of the studied algorithm could make it possible to leverage the power of {\em Coq} to mechanize proofs of correctness and/or impossibility results in given instances of evolving graphs. 

\section{Concluding remarks and open problems}
\label{sec:conclusion}

This paper suggested the combination of existing tools and the use of dedicated methods for the analysis of distributed algorithms in dynamic networks. The resulting framework allows to characterize assumptions that a given algorithm requires in terms of topological evolution during its execution. We illustrated it by the analysis of three basic algorithms, whose necessary and sufficient conditions were derived into a sketch of classification of dynamic networks. We showed how such a classification could be used in turn to compare algorithms on a formal basis and provide assistance in the selection of an algorithm. This classification was extended by an additional $10$ classes from recent literature. We finally discussed some implications of this work for mechanization of both decision support systems and analysis, including respectively the question of checking whether a given network trace belongs to one of the introduced classes, and prospective ideas on the combination of evolving graph and graph relabeling systems within the {\em Coq} proof assistant.

Analyzing the network requirements of algorithms is not a novel approach in general. It appears however that it was never considered in systematic manner for {\em dynamics}-related assumptions. Instead, the apparent norm in dynamic network analytical research is to study problems {\em once} a given set of assumptions has been considered, these assumptions being likely chosen for analytical convenience. This appears particularly striking in the recent field of \emph{population protocols}, where a common assumption is that a pair of nodes interacting once will interact infinitely often. In the light of the classification shown is this paper, such an assumption corresponds to a highly specific computing context. We believe the framework in this paper may help characterize weaker topological assumptions for the same class of problems.

Our work being mostly of a conceptual essence, a number of questions may be raised relative to its broader applicability. For example, the algorithms studied here are simple. A natural question is whether the framework will scale to more complex algorithms. We hope it could suit the analysis of most fundamental problems in distributed computing, such as \emph{election}, \emph{naming}, \emph{concensus}, or the construction of \emph{spanning structures} (note that \emph{election} and \emph{naming} may not have identical assumptions in a dynamic context, although they do in a static one). Our discussion on mechanization potentials left two significant questions undiscussed: how to check for the inclusion of an evolving graph in all the remaining classes, and how to approach the problem of mechanizing analysis relative to a general property. Another prospect is to investigate how intermediate properties could be explored between necessary and sufficient conditions, for example to guarantee a desired {\em probability} of success. Finally, besides these characterizations on feasibility, one may also want to look at the impact that particular properties may have on the {\em complexity} of problems and algorithms. Analytical research in dynamic networks is still in its infancy, and many exiting questions remain to be explored.


\section{Acknowledgments}
We are grateful to Pierre Cast\'eran and Vincent Filou for bringing our attention to the possible connections between this work and formal proof systems.
\newcommand{\etalchar}[1]{$^{#1}$}

\end{document}